\newtheorem{assumption}{Assumption}
\newtheorem{lemma}{Lemma}
\newtheorem{problem}{Problem}
\newtheorem{theorem}{Theorem}
\newtheorem{remark}{Remark}
\newtheorem{corollary}{Corollary}
\newcommand{\tM}{\textrm}
\newcommand{\tetrahedron}{
\mathchoice
{\includegraphics[height=1.6ex]{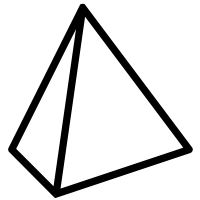}} 
{\includegraphics[height=1.6ex]{tetrahedron}} 
{\includegraphics[height=1.3ex]{tetrahedron}} 
{\includegraphics[height=0.9ex]{tetrahedron}} 
}
\begin{document}

\title{Distributed Finite-Time Cooperative Localization for Three-Dimensional Sensor Networks}

\author{Jinze Wu, Lorenzo Zino, Zhiyun Lin, and Alessandro Rizzo
\thanks{J. Wu, L. Zino, and A. Rizzo are with the Department of Electronics and Telecommunications, Politecnico di Torino, Torino, Italy. Emails: \texttt{\{jinze.wu,lorenzo.zino,alessandro.rizzo\}@polito.it}. J.~Wu and Z. Lin are with Shenzhen Key Laboratory of Control Theory and Intelligent Systems, School of System Design and Intelligent Manufacturing, Southern University of Science and Technology, Shenzhen, China. Email: \texttt{linzy@sustech.edu.cn}. Z. Lin is also with Peng Cheng Laboratory, Shenzhen, China. A. Rizzo is also with the Institute for Invention, Innovation, and Entrepreneurship, New York University Tandon School of Engineering, Brooklyn NY, US. }
}


\maketitle

\begin{abstract}
This paper addresses the distributed localization problem for a network of sensors placed in a three-dimensional space, in which sensors are able to perform range measurements, i.e., measure the relative distance between them, and exchange information on a network structure. First, we derive a necessary and sufficient condition for node localizability using barycentric coordinates. Then, building on this theoretical result, we design a distributed localizability verification algorithm, in which we propose and employ a novel distributed finite-time algorithm for sum consensus. Finally, we develop a distributed localization algorithm based on conjugate gradient method, and we derive a theoretical guarantee on its performance, which ensures finite-time convergence to the exact position for all localizable nodes. The efficiency of our algorithm compared to the existing ones from the state-of-the-art literature is further demonstrated through numerical simulations.
\end{abstract}


\section{Introduction} \label{Sec: Introduction}

Localization has become a critical issue as the position information of a set of agents plays an increasingly important role in multiple applications, such as target searching, environmental monitoring, and formation control~\cite{patwari2005locating, sayed2005network, lowell2011military, bayat2017environmental, lin2014distributed}. The most common method to obtain position information is to equip all agents with GPS, a satellite-based global positioning system that provides the coordinates of receivers~\cite{el2002introduction}. However, GPS has some drawbacks in terms of coverage and power consumption~\cite{buehrer2018collaborative}. More specifically, it may not work properly in environments with obstructions between the GPS satellites and the receivers, e.g., interior buildings and areas with dense vegetation or mountains. Furthermore, a scenario in which all agents are equipped with GPS consumes more energy than one with only a few of them equipped with GPS. Thus, cooperative localization based on terrestrial techniques such as cellular, WiFi, and ultra-wideband has come to the fore.

The objective of cooperative localization is to determine the Euclidean coordinates of all agents given the Euclidean coordinates of a (small) subset of reference agents, and assuming that the rest of the agents is able to make local measurements (e.g., the relative distance, bearing, or angle between them) and exchange information over a network. For this reason, in the rest of this paper we will refer to the set of agents as a \emph{sensor network}. Given the importance of this problem, several approaches have been proposed in the literature.

Since the essence of the cooperative localization problem is nonlinear (being the relationship between Euclidean coordinates and local measurements generally nonlinear), most authors in the literature focused on nonlinear algorithms, formulating the cooperative localization problem as a constrained nonlinear optimization problem~\cite{hendrickson1995molecule, aspnes2006theory, mao2007wireless, jing2021angle}. However, these approaches have several crucial limitations. First, solvers for such nonlinear optimization problems typically lack in theoretical guarantees to ensure global convergence~\cite{aspnes2006theory}. Second, as the scale of the sensor network grows, solving nonlinear equations results in an increased number of saddle points due to iterations, limiting the possibility to implement them in large-scale real-world applications~\cite{helmke2013equivariant}. Third, most of the nonlinear localization algorithms proposed in the literature need to be solved in a centralized fashion, yielding a more complex and less robust architecture. 

In order to address the limitations of nonlinear algorithms, distributed linear cooperative localization algorithms have attracted a lot of research interest in recent years. In particular, different algorithms have been proposed to tackle this problem, depending on the type of local measurements that the sensors are able to perform, including range-based~\cite{khan2009distributed, diao2015sequential, diao2014barycentric, cheng2016single, han2017barycentric}, bearing-based~\cite{zhao2016localizability, lin2016distributed, li2019globally, cao2021bearing}, relative position-based~\cite{barooah2007estimation, lin2015distributed}, angle-based~\cite{lin2020distributed, chen2022triangular, chen20223} and mixed measurements-based~\cite{lin2017distributed, fang20203, fang2021angle} methods. A common technique to represent space in these effort is the use of barycentric coordinates~\cite{möbius2018barycentrische} --a coordinate system in which the location of a point is specified with reference to other points. Barycentric coordinate systems have emerged as one of the most effective tools to obtain the prerequisites for addressing the cooperative localization problem in a distributed and linear manner.

The authors in~\cite{khan2009distributed} proposed a distributed localization algorithm based on barycentric coordinates that uses range-based measurements (i.e., in which it is assumed that sensors can measure the relative distance between them). This algorithm is implemented by expressing the positions of the sensors as a pseudo linear system with all nonlinearities hidden in range measurements and relies on the assumption that each sensor requires to lie inside the convex hull of its neighbors. To remove this assumption,~\cite{diao2014barycentric} developed a novel localization algorithm in a two-dimensional space. Building on this approach,~\cite{cheng2016single} presented a more robust localization algorithm in a two-dimensional space, adopting the idea of the congruent framework. Such approach was successfully extended to a three-dimensional space in~\cite{han2017barycentric}. Besides range-based measurements, barycentric coordinates were also employed to solve localization problems in other scenarios, with bearing~\cite{lin2016distributed, cao2021bearing}, relative position~\cite{lin2015distributed}, angle~\cite{lin2020distributed} and mixed~\cite{lin2017distributed} measurements. Moreover, the angle-based localization algorithms proposed in~\cite{chen2022triangular, chen20223} can be converted into barycentric coordinates for solution as well.

In addition to designing localization algorithms, a problem of paramount importance is to determine which nodes are localizable, given the specific characteristics of the sensor network setup (i.e., topology of the sensing and communication channel, and type of measurement that sensor can perform)~\cite{ping2022node}. Most of the aforementioned works have focused on proposing localization algorithms under the assumption that all sensor nodes can be localized in the sensor network. In other words, these works solely address network localizability without delving into the node localizability. However, research on node localizability in the localization problem based on barycentric coordinates holds significant importance. Neglecting whether a sensor node is localizable or not can lead to errors in the localization process. For instance, the location of a localizable sensor may be inaccurately estimated due to the iterative propagation of wrong positions from unlocalizable sensors. Therefore, it is crucial to identify all localizable and unlocalizable sensor nodes before implementing a distributed localization algorithm. Accurate identification of localizable sensors is a key step to ensure the reliability and precision of the localization process in a sensor network.

In this paper, we address the localization problem, without making any a priori assumption on localizability. Specifically, we deal with the range-based localization problem for sensor networks in a three-dimensional space. The main contribution of this paper is threefold. First, we prove a necessary and sufficient condition for node localizability of range-based localization problem in a three-dimensional space using barycentric coordinates, and then propose a distributed verification algorithm to distinguish between localizable and unlocalizable sensor nodes in the sensor network. With this verification and filtering process, many localization algorithms based on barycentric coordinates can still succeed even in the presence of unlocalizable sensor nodes, further lifting the restrictions of localization algorithms. Second, we develop a distributed conjugate gradient localization algorithm to efficiently solve the localization problem for the sensor network in finite time. Its finite-time convergence is proved theoretically, and then validated in numerical simulations. Third, in the development of our algorithms, we propose a novel distributed algorithm, which is able to achieve sum consensus among all nodes in the network in finite time. Such algorithm, which is used in our localization process, is a distributed consensus algorithm that considers the tradeoff between memory size and iteration steps, and can be directly employed in other application fields.

The rest of the paper is organized as follows. Section~\ref{Sec: Preliminaries} presents notation and preliminaries. In Section~\ref{Sec: Formulation}, we formulate the research problem. In Section~\ref{Sec: Localizability}, we propose a distributed algorithm to address the localizability verification problem. Section~\ref{Sec: Localization} is devoted to the distributed localization algorithm and the proof of its finite-time convergence. Numerical simulations are presented in Section~\ref{Sec: Simulation}. Section~\ref{Sec: Conclusion} concludes the paper and outlines future research directions.

\section{Notation and Preliminaries} \label{Sec: Preliminaries}


\subsection{Notation}

We use uppercase letters for matrices, bold lowercase letters for vectors, and lowercase letters for scalars. With $\mathbf{x}\in\mathbb R^n$ we refer to a column vector, and we use the transpose operator ($\mathbf{x}^\top$) to denote row vectors. Let $\mathbb{R}^n$ be the $n$-dimensional real coordinate space. $I_n$ denotes the identity matrix of order $n \times n$, while $\mathbf{1}_n$ denotes the $n$-dimensional vector with all entries equal to $1$ and $\mathbf{e_i}$ denotes a vector whose $i$th entry is $1$ and all other entries are $0$. The symbol $\| \cdot \|$ denotes the 2-norm and $\otimes$ denotes the Kronecker product. Moreover, the symbol $\lceil x \rceil $ refers to the smallest integer that is larger than or equal to a real number $x$. Finally, let $\tM{diag}( \cdot )$ denote the diagonal matrix. Given a matrix $A\in\mathbb R^{n\times m}$, $\tM{rank}( A )$ and $\tM{ker}( A )$ are its rank and kernel, respectively.

\subsection{Graph Theory}

An {\em undirected graph} $\mathcal{G} = (\mathcal{V}, \mathcal{E})$ consists of a nonempty set $\mathcal{V}$, whose elements are called {\em nodes} or {\em vertices}, and a set of unordered pairs of nodes $\mathcal{E} \subseteq \mathcal{V}\times \mathcal{V}$, whose elements are called {\em edges}. For a node $i$, we define its {\em neighbor set} as $\mathcal{N}_i:= \{j \in \mathcal{V} : (i,j) \in \mathcal{E}\}$. Clearly, if $i \in \mathcal{N}_j$ then $j \in \mathcal{N}_i$. A {\em path} of length $\ell$ from $i\in\mathcal V$ to $j\in\mathcal V$ is a sequence of $\ell$ edges of the form $(i,v_1),(v_1,v_2),\dots,(v_{\ell-1},j)$. A graph is said to be \emph{connected} if there is a path between every pair of vertices. The {\em diameter} of a connected graph, denoted by $\delta$, is the maximum among all the lengths of the shortest path between any pair of vertices of the graph. A subset of $d$ vertices such that each and every pair of them is connected through an edge is said to be a \emph{clique} of order $d$. Here, we shall refer to cliques of order $4$ as \emph{tetrahedrons} and use the notation $\tetrahedron ijkl$ for a tetrahedron with vertices $i,j,k,l\in\mathcal V$. Note that a clique of order $5$ is formed by $5$ tetrahedrons.

Given a set of $n$ nodes positioned in the Euclidean space $\mathbb R^3$, we define their \emph{configuration} as the vector $\mathbf{p} = [\mathbf{p}^\top_1, \cdots, \mathbf{p}^\top_n]^\top \in \mathbb R^{3n}$, where the entry $\mathbf{p_i} \in \mathbb R^3$ is the position of node $i$ in the three-dimensional Euclidean space. A configuration $\mathbf{p} = [\mathbf{p}^\top_1, \cdots, \mathbf{p}^\top_n]^\top \in \mathbb R^{3n}$ is said to be {\em generic} if the coordinates $\mathbf{p_1}, \cdots, \mathbf{p_n}$ do not satisfy any nonzero polynomial equation with integer coefficients or equivalently algebraic coefficients~\cite{singer2010uniqueness}. In other words, a generic configuration has no degeneracy, that is, no three points staying on the same line, no three lines go through the same point, no four points staying on the same plane, etc.

A {\em framework} $\mathcal{F} = (\mathcal{G}, \mathbf{p})$ in the Euclidean space $\mathbb R^3$ is formed by a graph $\mathcal{G} = (\mathcal{V}, \mathcal{E})$ and a configuration $\mathbf{p}\in\mathbb R^{3n}$. Two frameworks with the same graph $(\mathcal{G}, \mathbf{p})$ and $(\mathcal{G}, \mathbf{q})$ are said to be {\em congruent} (denoted by $(\mathcal G, \mathbf{p}) \equiv (\mathcal G, \mathbf{q})$) if and only if (iff) $\|\mathbf{p_i} - \mathbf{p_j} \| = \|\mathbf{q_i} - \mathbf{q_j}\|$ holds for all pairs $i, j \in \mathcal{V}$. In other words, all nodes of two congruent frameworks maintain the same distances between pairs. It is easy to check that the property of being congruent is an equivalence relation. It is important to notice that such an equivalence relation preserves volumes in the space.

\subsection{Barycentric Coordinates}

The {\em barycentric coordinates} were introduced by A. F. Möbius in 1827 as mass points to define a coordinate-free geometry~\cite{möbius2018barycentrische}. In fact, barycentric coordinates characterize the relative position of a node with respect to other nodes. Specifically, given five nodes $i, j, k, l$, and $h$ with their Euclidean coordinates $\mathbf{p_i}, \mathbf{p_j}, \mathbf{p_k}, \mathbf{p_l},\mathbf{p_h}\in\mathbb R^3$, the barycentric coordinates of node $i$ with respect to nodes $j, k, l$, and $h$ are equal to the quadruple $\{ a_{ij}, a_{ik}, a_{il}, a_{ih} \}$, solution of the following system of equations:
\begin{equation}
\label{Eq: barycentric coordinates}
\begin{cases}
	\mathbf{p_i} = a_{ij} \mathbf{p_j} + a_{ik} \mathbf{p_k} + a_{il} \mathbf{p_l} + a_{ih} \mathbf{p_h},\\
	a_{ij} + a_{ik} + a_{il} +a_{ih} = 1.
\end{cases}
\end{equation}
In other words, using the barycentric coordinates, we represent the position of a point in a three-dimensional space as a combination of the positions of four other points in the space, and we use the weights of such combination to identify the point. Fig.~\ref{Fig: barycentric coordinate} illustrates a graphic representation of the barycentric coordinates. 

The barycentric coordinates of node $i$ can be calculated using the signed volumes between corresponding tetrahedrons~\cite{möbius2018barycentrische}. Specifically, it holds
\begin{equation}
\label{Eq: barycentric coordinates - a}
a_{i j}=\frac{V_{i k l h}}{V_{j k l h}}, 
a_{i k} = \frac{V_{j i l h}}{V_{j k l h}}, \\
a_{i l}=\frac{V_{j k i h}}{V_{j k l h}},
a_{i h}=\frac{V_{j k l i}}{V_{j k l h}},
\end{equation}
where $V_{iklh}, V_{jilh}, V_{jkih}, V_{jkli}$ and $V_{jklh}$ are the signed volumes of the tetrahedrons $\tetrahedron iklh$, $\tetrahedron jilh$, $\tetrahedron jkih$, $\tetrahedron jkli$, and $\tetrahedron jklh$, respectively. The signed volume $V_{iklh}$ is equal in modulus to the volume of the tetrahedron $\tetrahedron iklh$, with positive sign if nodes $i, k, l$, and $h$ conform to the right hand corkscrew rule, and negative otherwise. Given the coordinates of its four vertices, the signed volume $V_{iklh}$ is equal to the following determinant of a $4\times4$ matrix:
\begin{equation}
\label{Eq: volume}
	V_{iklh} = \frac{1}{6} \left| 
	\begin{array}{cccc}
		1 & 1 & 1 & 1 \\
		\mathbf{p_{i}} & \mathbf{p_{k}} & \mathbf{p_{l}} & \mathbf{p_{h}} \\
	\end{array} \right|.
\end{equation}
Note that, due to the volume conservation law under roto-translation, the barycentric coordinates of a node are invariant with respect to congruent frameworks. 

\begin{figure}
	\centering
	\includegraphics[width=0.5 \linewidth]{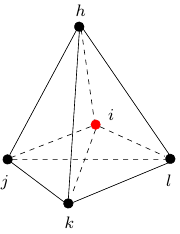}
	\caption{Illustration of the barycentric coordinates of node $i$ with respect to nodes $j, k, l$, and $h$ in a three-dimensional space. }\vspace{-10pt}
	\label{Fig: barycentric coordinate}
\end{figure}

\section{Problem Setup and Statement} \label{Sec: Formulation}


\subsection{Problem Setup}

Consider $n$ sensors positioned in a three-dimensional Euclidean space. Sensors are allowed to interact with each others ---measuring their relative distance (range measurement)--- and communicate, subject to some constraints. Specifically, we define a \emph{sensor network} that is represented by an undirected graph $\mathcal{G} = (\mathcal{V}, \mathcal{E})$ by associating a node of the graph with each sensor, and edges to describe the sensing and communication topology of the sensor network. That is, we assume that $(i,j)\in\mathcal E$ iff sensors $i$ and $j$ can measure their relative distance and communicate, i.e., we assume that the sensing and communication graphs of the sensor network are the same and that the network is connected. 

The sensor network consists of a small number of sensors with known position and a large number of sensors with unknown position. The sensors whose position are known (e.g., obtained through GPS or being able to self-calibrate) are called \emph{anchor nodes}. The rest of the sensors are called \emph{free nodes}. The objective of cooperative localization is to estimate the Euclidean coordinates in the global coordinate system of the free nodes in a distributed fashion by exchanging information on their range measurements, given the absolute Euclidean coordinates of the anchor nodes. 

We observe that a necessary condition for all free nodes to be localized in a three-dimensional space using relative measurements is that the sensor network has at least four anchor nodes~\cite{eren2004rigidity}. Hence, without any loss in generality, we assume there are four anchor nodes and $n - 4$ free nodes. We denote these sets as $\mathcal V = \mathcal V_f \cup \mathcal V_a$, with $\mathcal V_f =\{1, \cdots, n-4\}$ and $\mathcal V_a = \{n-3, \cdots, n\}$ denoting the free node set and the anchor node set, respectively. If the sensor network has the edge $(i, j) \in \mathcal{E}$, then node $i$ can measure the range measurement $d_{ij} = \Vert \mathbf{p_i} - \mathbf{p_j} \Vert$ and exchange such information with node $j$. Let $\mathbf{p_f} = [\mathbf{p_1}^\top, \cdots, \mathbf{p_{n-4}}^\top]^\top\in\mathbb{R}^{3(n-4)}$ and $\mathbf{p_a} = [\mathbf{p_{n-3}}^\top, \cdots, \mathbf{p_n}^\top]^\top\in\mathbb{R}^{12}$ denote the Euclidean coordinate vectors of the free nodes and of the anchor nodes, respectively. It is clear that $\mathbf{p_a}$ is known, but the Euclidean coordinates of free node set $\mathbf{p_f}$ needs to be estimated.

All free nodes in the network can be characterized depending on their possibility to be localized in a distributed fashion using barycentric coordinates. Hence, we define two different classes: \emph{localizable} nodes and \emph{unlocalizable} nodes. According to~\cite{eren2004rigidity}, a node requires at least four neighbors for localizability in a three-dimensional space. Moreover, in order to use barycentric coordinates, a node $i\in\mathcal V$ should position itself with respect to other $4$ nodes. Intuitively, we need these $5$ nodes (i.e., node $i$ and the four neighbors) to be able to communicate in order to compute the volumes of the $5$ tetrahedrons they form and, ultimately, the barycentric coordinates of $i$. Hence, node $i$ should belong to a clique of order $5$ to be localized. As a consequence, all nodes that do not belong to any clique of order $5$ belong necessarily to the class of unlocalizable nodes, which we term \emph{scarcely connected} nodes.

\begin{remark} \label{rem:clique}
Node $i$ has information on its neighbor set $\mathcal{N}_i$. Moreover, $i$ can exchange such information with its neighbors. Hence, for any $j \in \mathcal{N}_i$, the neighbor set $\mathcal{N}_j$ can be made available to node $i$ via communication. Thus, by comparing these sets, it is easy for node $i$ to know whether it belongs to a clique of order $5$, without needing centralized information.
\end{remark}

For the sake of simplicity, we will assume that all scarcely connected nodes are detected a priori following Remark~\ref{rem:clique} and removed from the network. Hence, we assume that each of all the (remaining) $n - 4$ free nodes belong to at least one clique of order $5$. However, belonging to a clique of order at least $5$ is a necessary condition for localizability, but is not sufficient. Hence, we need to determine a sufficient condition for localizability, and then design an algorithm to classify the free nodes depending on their localizability, before any localization algorithm can be applied.

\subsection{Problem Statement}

In view of our problem setting, the distributed cooperative localization problem explored in this paper is described as follows. We consider a sensor network $\mathcal{G}$ with $n$ nodes in a three-dimensional space, in which sensors can make range measurements and communicate with their neighbors on $\mathcal{G}$. We assume that the absolute positions of four anchor nodes $\mathbf{p_a}$ are known. Our objective is twofold.

\begin{problem}[Localizability verification] \label{problem1}
    Design a distributed localizability test scheme to determine for which of the free nodes it is possible to determine the absolute position.
\end{problem} 

\begin{problem}[Localization algorithm] \label{problem2}
    Design a localization algorithm to estimate $\mathbf{p_i}$ in finite time, for each free node $i\in\mathcal V_f$ that is localizable. 
\end{problem}

\section{Localizability Verification Algorithm}  \label{Sec: Localizability}

In this section, we address Problem~\ref{problem1}. We start by providing a necessary and sufficient condition for node localizability. Then, we use such condition to build an algorithm to verify node localizability in a distributed fashion.

\subsection{Necessary and Sufficient Node Localizability Condition} \label{SubSec: Node Localizability Condition}

Before presenting the necessary and sufficient condition of node localizability, we first illustrate how the problem of estimating the Euclidean coordinates of a sensor network can be cast as a set of linear equations that can be derived in a distributed fashion using barycentric coordinates.

Consider a clique of order $5$ formed by nodes $i,j,k,l,h\in\mathcal V$, with positions $\mathbf{p_i},\mathbf{p_j},\mathbf{p_k},\mathbf{p_l},\mathbf{p_h}\in\mathbb R^3$ in a framework we denote as $\mathcal{F}_p$. Node $i$ can communicate with the other four neighbors and can get access to the range measurements $d_{ij}, d_{ik}, d_{il}$, and $d_{ih}$ through onboard sensing or communication from its neighbors. Moreover, since communication and sensing networks coincide, node $i$ can also receive direct information about measurements $d_{jk}, d_{jl}, d_{jh}, d_{kl}, d_{kh}$, and $d_{lh}$ through communication with nodes $j, k, l$, and $h$. These measurements can be utilized to construct a matrix $D\in \mathbb{R}^{5\times 5}$ with its generic entry $D_{xy}:=d_{xy}^2$, that is, equal to the square of the distance measured between nodes $x$ and $y$ for $x, y \in \{i, j, k, l, h\}$, and we define $J = I_5 - \frac{1}{5} \mathbf{1}_5 \mathbf{1}_5^\top$. Then we can apply the Congruent Framework Construction (CFC) algorithm summarized in the pseudocode in Algorithm~\ref{Alg: Congruent Framework Construction Algorithm}, obtaining matrix $Q = [\mathbf{q_i}\,\mathbf{q_j}\,\mathbf{q_k} \,\mathbf{q_l}\,\mathbf{q_h}]$, to construct a congruent framework $\mathcal{F}_q$ for the clique. 

\begin{algorithm}[H]
	\caption{CFC Algorithm}
	\label{Alg: Congruent Framework Construction Algorithm}
	\renewcommand{\algorithmicrequire}{\textbf{$\bullet$}}
	\begin{algorithmic}[1]
		\REQUIRE \textbf{Function:} $Q = \textbf{CFC}(D)$
		\STATE Compute $X = - \frac{1}{2} J D J$.
		\STATE Do singular value decomposition on $X$ as $X = V \Lambda V^\top$, 
		where $V = [\mathbf{v_1}, \mathbf{v_2}, \mathbf{v_3}, \mathbf{v_4}, \mathbf{v_5}]$ is a $5 \times 5$ unitary matrix and $\Lambda$ is a diagonal matrix whose diagonal elements $\lambda_1 \geq \lambda_2 \geq \lambda_3 \geq \lambda_4 \geq \lambda_5\geq 0$ are singular values.
		\STATE Compute $\Lambda_* = \tM{diag}(\lambda_1,\lambda_2,\lambda_3) \in \mathbb R^{3 \times 3}$ and $V_* = [\mathbf{v_1}, \mathbf{v_2}, \mathbf{v_3}] \in \mathbb R^{5 \times 3}$.
		\STATE Compute $Q = \Lambda_*^{1/2} V_*^\top$.
		\RETURN Q
	\end{algorithmic}
\end{algorithm}

Once a congruent framework of the clique of order $5$ is generated using the CFC Algorithm, we compute the volumes of the tetrahedrons in the congruent framework using \eqref{Eq: volume}, and then obtain the barycentric coordinates $\{ a_{ij}, a_{ik}, a_{il}, a_{ih} \}$ of node $i$ using \eqref{Eq: barycentric coordinates - a}.

Ultimately, using \eqref{Eq: barycentric coordinates}, we establish the following linear equation for the Euclidean coordinates of the five nodes belonging to the clique:
\begin{equation}
\label{Eq: linear constraint - individual}
        \mathbf{p_i} = a_{ij} \mathbf{p_j} + a_{ik} \mathbf{p_k} + a_{il} \mathbf{p_l} + a_{ih} \mathbf{p_h},
\end{equation}
where the constants $a_{ij}$, $ a_{ik}$, $a_{il}$, and $a_{ih}$ are known, and $\mathbf{p_i}, \mathbf{p_j}, \mathbf{p_k}, \mathbf{p_l}, \mathbf{p_h}$ are the unknown absolute Euclidean coordinates of nodes $i, j, k, l$, and $h$ in the global coordinate system.

It is easy to find that a node may lie in multiple cliques. For any of them, we can establish one linear equality. Supposing that node $i$ lies in a total of $r_i$ cliques, we aggregate all the equations \eqref{Eq: linear constraint - individual} for the free nodes in matrix form as
\begin{equation} 
\label{Eq: linear constraint - matrix}
	\begin{bmatrix}
		 \underbrace{\mathbf{p_1} \dots \mathbf{p_1}}_{r_1} \dots \underbrace{\mathbf{p_{n-4}} \dots \mathbf{p_{n-4}}}_{r_{n-4}}
	\end{bmatrix}^\top\hspace{-.2cm}= \left( \begin{bmatrix}
		C & B \\
	\end{bmatrix}
	\otimes I_3 \right)
	\begin{bmatrix}
		\mathbf{p_f}\\ \mathbf{p_a}
	\end{bmatrix},
\end{equation}
where $C \in \mathbb{R}^{ \left(\sum_{i=1}^{n-4} r_i \right) \times \left( n-4 \right)}$ and $B \in \mathbb{R}^{\left( \sum_{i=1}^{n-4} r_i \right) \times 4}$ are two matrices that collects all the barycentric coordinates of the vertices associated with free nodes and anchor nodes, respectively. Define
\begin{equation}
    E: = [ \underbrace{\mathbf{e_1}, \cdots, \mathbf{e_1}}_{r_1}, \underbrace{\mathbf{e_2}, \cdots, \mathbf{e_2}}_{r_2}, \cdots, \underbrace{\mathbf{e_{n-4}}, \cdots, \mathbf{e_{n-4}}}_{r_{n-4}} ]^\top,
\end{equation}
as a $\left( \sum_{i=1}^{n-4} r_i \right) \times \left( n-4 \right)$ matrix. Let $\bar{C} = C \otimes I_3$, $\bar{B} = B \otimes I_3$ and $\bar{E} = E \otimes I_3$, we can write \eqref{Eq: linear constraint - matrix} in the following compact matrix form
\begin{equation}
\label{Eq: linear constraint - EC}
	\left( \bar{E} - \bar{C} \right) \mathbf{p_f} = \bar{B} \mathbf{p_a}.
\end{equation}

Define $M = E - C$ and $\bar{M} = \bar{E} - \bar{C}$. Then \eqref{Eq: linear constraint - EC} becomes
\begin{equation}
\label{Eq: linear constraint - M}
	\bar{M} \mathbf{p_f}= \bar{B} \mathbf{p_a},
\end{equation}
where $\bar{M} \in \mathbb{R}^{\left( 3 \sum_{i=1}^{n-4} r_i \right) \times 3 \left(n - 4\right)}$.

Next, subject to the linear equation constrains in \eqref{Eq: linear constraint - M}, we address the node localizability problem and present the necessary and sufficient conditions of node localizability.

The least squares method is considered to obtain the solution by minimizing the sum of the squares of the residuals made in the results of each individual equation. For the equation system \eqref{Eq: linear constraint - M}, the least squares formula is obtained from the problem
\begin{equation}
    \min_{\mathbf{p_f}} \| \bar{M} \mathbf{p_f} - \bar{B} \mathbf{p_a} \|,
\end{equation}
the solution of which can be written with the normal equation
\begin{equation}
\label{Eq: solution to least square formula}
	\mathbf{p_f} = \left( \bar{M}^\top \bar{M} \right)^{-1} \bar{M}^\top \bar{B} \mathbf{p_a},
\end{equation}
provided $ \left( \bar{M}^\top \bar{M} \right)^{-1}$ exists, which is equivalent to $\bar{M}$ has full column rank. Note that for \eqref{Eq: solution to least square formula}, an approximate solution is found when no exact solution exists. Moreover, \eqref{Eq: solution to least square formula} can be viewed as the solution to the following linear equation system
\begin{equation}
\label{Eq: linear constraint - MM}
	\bar{M}^\top \bar{M} \mathbf{p_f} = \bar{M}^\top \bar{B} \mathbf{p_a},
\end{equation}
if $\bar{M}^\top \bar{M}$ is invertible.


\begin{remark}
	It is easy to see that $\mathbf{p_f}$ in \eqref{Eq: linear constraint - MM} can be uniquely solved iff $\tM{rank} (\bar{M}^\top \bar{M}) = 3 \left( n - 4 \right)$, which is equivalent to $\tM{rank} (\bar{M}) = 3 \left( n - 4 \right)$ and $\tM{rank} (M) = n - 4$. Hence if the rank of the matrix $\bar{M}$ is less than $3 (n - 4)$, namely, the rank of the matrix $M$ is less than $n - 4$, there must exist unlocalizable nodes in the sensor network.
\end{remark}

By utilizing the definitions $\bar{M} = M \otimes I_3$ and $\bar{B} = B \otimes I_3$, we can interpret the linear equation system \eqref{Eq: linear constraint - MM} as three equivalent linear equation systems: $ M^\top M \mathbf{p_f^\varphi} = M^\top B \mathbf{p_a^\varphi}$, where the superscript $\varphi \in \{x, y, z \}$ denotes the Euclidean coordinate of $\varphi$ axis. Here, $\mathbf{p_f^\varphi}$ is a vector comprising the $\varphi$ axis Euclidean coordinates of all free nodes. Consequently, the localizability test for node $i$ can be transformed into a more concise exploration based on the aforementioned linear equations for $\varphi \in \{x, y, z \}$, which share the same matrix $M^\top M$. This dimension reduction also leads to a reduction in computational complexity during the verification test. Therefore, we can conclude that a node $i\in\mathcal V_f$ is localizable if it belongs to a clique of order $5$ and for any $\mathbf{x^\varphi}$ satisfying $M^\top M \mathbf{x^\varphi} = M^\top B \mathbf{p_a^\varphi} $, it must hold that $x_i^\varphi = p_i^\varphi$, where $x_i^\varphi$ and $p_i^\varphi$ are the corresponding scalars in $\mathbf{x_i} = [x_i^x, x_i^y, x_i^z]^\top$ and $\mathbf{p_i} = [p_i^x, p_i^y, p_i^z]^\top$, respectively.

Now we are ready to establish the necessary and sufficient conditions for localizability using the barycentric coordinates.

\begin{theorem} \label{Tho: Localizablity}
Supposing that node $i$ belongs to at least one clique of order $5$, then node $i$ is localizable iff $\mathbf{e_i} \bot \tM{ker}\left( M^\top M \right)$, where $\tM{ker}(M^\top M)$ denotes the kernel of the matrix $M^\top M$.
\end{theorem}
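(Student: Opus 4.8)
The plan is to characterize the full solution set of the reduced normal equations and then read off when the $i$-th coordinate is pinned down across that set. First I would establish that the true configuration is a particular solution. Since the barycentric coordinates produced by the CFC algorithm are computed on a framework congruent to $\mathcal{F}_p$ and are therefore congruence-invariant, equation~\eqref{Eq: linear constraint - individual} holds exactly at the true positions; aggregating these identities gives $M \mathbf{p_f^\varphi} = B \mathbf{p_a^\varphi}$ for each axis $\varphi \in \{x,y,z\}$, and premultiplying by $M^\top$ shows that the true coordinate vector $\mathbf{p_f^\varphi}$ satisfies the reduced normal equation $M^\top M \mathbf{x^\varphi} = M^\top B \mathbf{p_a^\varphi}$.

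Next I would describe all solutions. If $\mathbf{x^\varphi}$ is any solution, subtracting the particular solution yields $M^\top M (\mathbf{x^\varphi} - \mathbf{p_f^\varphi}) = \mathbf{0}$, so $\mathbf{x^\varphi} - \mathbf{p_f^\varphi} \in \tM{ker}(M^\top M)$; conversely every element of $\mathbf{p_f^\varphi} + \tM{ker}(M^\top M)$ is a solution. Hence the solution set of each axis is the affine subspace $\mathbf{p_f^\varphi} + \tM{ker}(M^\top M)$, and the translating subspace $\tM{ker}(M^\top M)$ is the \emph{same} for all three axes, which is what lets the $\varphi$-dependence wash out.

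I would then translate the definition of localizability directly. By definition, node $i$ is localizable iff $x_i^\varphi = p_i^\varphi$ for every solution $\mathbf{x^\varphi}$ and every $\varphi$; writing $\mathbf{x^\varphi} = \mathbf{p_f^\varphi} + \mathbf{z}$ with $\mathbf{z} \in \tM{ker}(M^\top M)$, this is equivalent to $z_i = \mathbf{e_i}^\top \mathbf{z} = 0$ for all $\mathbf{z} \in \tM{ker}(M^\top M)$. For the ``if'' direction, $\mathbf{e_i} \bot \tM{ker}(M^\top M)$ forces $z_i = 0$ on the kernel, so the $i$-th coordinate is constant along the solution set and equals $p_i^\varphi$. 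For the ``only if'' direction, if some $\mathbf{z} \in \tM{ker}(M^\top M)$ had $z_i \neq 0$, then $\mathbf{p_f^\varphi} + \mathbf{z}$ would be an admissible solution with $x_i^\varphi = p_i^\varphi + z_i \neq p_i^\varphi$, contradicting localizability. Thus localizability of $i$ is exactly $\mathbf{e_i}^\top \mathbf{z} = 0$ for all $\mathbf{z} \in \tM{ker}(M^\top M)$, i.e.\ $\mathbf{e_i} \bot \tM{ker}(M^\top M)$.

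I expect the only genuine obstacle to be the first step --- rigorously justifying that the true positions solve the normal equations --- which rests on the congruence-invariance of barycentric coordinates already established in the setup; the remainder is the standard affine-solution-set argument. A minor point worth recording is that $\tM{ker}(M^\top M) = \tM{ker}(M)$ for a real matrix $M$, so the test could equivalently be phrased with $\tM{ker}(M)$, consistent with the rank discussion in the preceding remark. The clique-of-order-$5$ hypothesis enters only to guarantee that node $i$ actually appears in system~\eqref{Eq: linear constraint - M}, so that the characterization is non-vacuous.
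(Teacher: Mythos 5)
Your proposal is correct and follows essentially the same route as the paper: both arguments rest on the observation that the true coordinate vector $\mathbf{p_f^\varphi}$ is a particular solution of the normal equations, so the full solution set is the affine subspace $\mathbf{p_f^\varphi} + \tM{ker}(M^\top M)$, and localizability of node $i$ amounts to the $i$th coordinate vanishing on the kernel, i.e.\ $\mathbf{e_i} \bot \tM{ker}(M^\top M)$. If anything, your write-up is slightly more complete than the paper's, since you explicitly justify (via congruence-invariance of the barycentric coordinates) that the true positions satisfy the normal equations --- a fact the paper's proof uses tacitly --- and you record the useful identity $\tM{ker}(M^\top M) = \tM{ker}(M)$.
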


\begin{proof}
We start by proving sufficiency. Suppose to the contrary that node $i$ is not localizable, That is to say, there exists a vector $\mathbf{x^\varphi}$ that satisfies $M^\top M \mathbf{x^\varphi} = M^\top B \mathbf{p_a^\varphi}$ but does not satisfy $x_i^\varphi = p_i^\varphi$. Then it can be obtained that $M^\top M (\mathbf{x^\varphi} - \mathbf{p_f^\varphi}) = 0$. That is, $\mathbf{x^\varphi} - \mathbf{p_f^\varphi} \in \tM{Ker} (M^\top M)$ but the $i$th component is nonzero. Hence, $\mathbf{e_i}^\top \left( \mathbf{x^\varphi} - \mathbf{p_f^\varphi} \right) \neq 0$, which contradicts to the condition $\mathbf{e_i}^\top \perp \tM{ker}(M^\top M)$.

Now, we prove necessity. If node $i$ is localizable, then for any vectors $\mathbf{x^\varphi}$ and $\mathbf{y^\varphi}$ satisfying
\begin{equation}
\label{Eq: theorem proof - x}
M^\top M \mathbf{x^\varphi} = M^\top B \mathbf{p_a^\varphi}
\end{equation}
and
\begin{equation}
\label{Eq: theorem proof - y}
M^\top M \mathbf{y^\varphi} = M^\top B \mathbf{p_a^\varphi},
\end{equation}
it must hold that $x_i^\varphi = y_i^\varphi$, where $x_i^\varphi$ and $y_i^\varphi$ are the $i$th components of $\mathbf{x^\varphi}$ and $\mathbf{y^\varphi}$. That is to say, for any $w \in \tM{ker}(M^\top M)$, the $i$th component of $w$ must be zero as otherwise there must exist two different vectors $\mathbf{x^\varphi}$ and $\mathbf{y^\varphi}$ satisfying \eqref{Eq: theorem proof - x} and \eqref{Eq: theorem proof - y} such that $x_i^\varphi \neq y_i^\varphi $. Therefore, $\mathbf{e_i} \bot \tM{ker}\left( M^\top M \right)$.
\end{proof}


\subsection{Distributed Verification Algorithm for Node Localizability}

Building on Theorem~\ref{Tho: Localizablity}, we design a distributed algorithm to address Problem~\ref{problem1}, determining whether a generic node in the sensor network is localizable. In this algorithm, we compute the eigenvalues and eigenvectors of the matrix $M^\top M$ in a distributed fashion, following three steps. First, we design a subalgorithm to compute local sum of the node based on the barycentric coordinates only through neighbor sensing and communication. Second, we propose a subalgorithm that achieves sum consensus among all nodes in the sensor network within a finite number of iterations. Third, leveraging the aforementioned algorithms, design our distributed verification algorithm that solves Problem~\ref{problem1}.

\subsubsection{Local sum}
Given an arbitrary matrix $X$ with appropriate dimensions, where the $i$th row vector or column vector is denoted as $\mathbf{x_i}$, this Local Sum (LS) algorithm is designed to compute $\mathbf{y_i}$, the corresponding row or column vectors of the matrix products $M^\top M X$, $\bar{M}^\top \bar{M} X$, or $\bar{M}^\top \bar{B} X$, in a distributed manner. A pseudocode for this algorithm is reported in Subalgorithm~\ref{Alg: Local Sum Algorithm}.

In order to provide a clearer explanation of the algorithm, we recall that the matrix $M$ is structured in such a way that each row consists of the barycentric coordinates $a_{ij}$ of free nodes, i.e. $j \in \mathcal{N}_i \cap \mathcal{V}_f$, with the convention that $a_{ii} = -1$, and the remaining elements are zeros. While matrix $B$ only comprises the barycentric coordinates $a_{ij}$ of anchor nodes, i.e. $j \in \mathcal{N}_i \cap \mathcal{V}_a$. Note that the barycentric coordinates $a_{ij}$ can be computed using the congruent framework by Algorithm~\ref{Alg: Congruent Framework Construction Algorithm}. Exploiting these structures, the computation of each row in the matrix multiplication involving matrix $M$, $\Bar{M}$, and $\Bar{B}$ can be reformulated as a calculation of sums obtained by multiplying each $a_{ij}$ with the corresponding vector $\mathbf{x_i}$ held by neighboring nodes. This is possible due to the fact that the elements in each row of $M$, $\bar{M}$, and $\bar{B}$ only involve the corresponding node and its neighbors. Hence we name this computation as Local Sum and it can be efficiently performed by enabling communication solely between nodes and their respective neighbors. 

In this way, for each node $i$ that has the knowledge of $\mathbf{x_i}$, after obtaining $\mathbf{x_j}$ by communicating with neighboring nodes, the barycentric coordinates $a_{ij}$ can be utilized to calculate $\mathbf{y_i}$ using the LS algorithm in Subalgorithm~\ref{Alg: Local Sum Algorithm}.

On the one hand, the LS algorithm will be applied multiple times in the Distributed Localizablity Verification Algorithm. Given the $(n - 4)$-dimensional row vector $\mathbf{x_j}$ held by neighboring nodes of node $i$, it enables the local acquisition of $\mathbf{y_i}$ (the $i$th row vector of $M^\top M X$) through communication with neighbors. Here, $X$ is an arbitrary $(n - 4) \times (n - 4)$ matrix, with $\mathbf{x_j}$ as the $j$th row vector. The whole process could be described as follows. After receiving $\mathbf{x_j}$ through communication with neighbors, the algorithm computes an intermediate result, consisting with the row vectors of $\xi = M X$. Then, employing another round of communication through which nodes exchange this intermediate result, nodes are finally able to compute the desired $i$th row vector of $M^\top M X$, here denoted by $\mathbf{y_i}$. This step is key for the distributed verification of the relation \eqref{Eq: linear constraint - MM}, as illustrated at the end of this section.

On the other hand, the LS Algorithm will also be used later in this paper to pursue localization. In that application, the input $\mathbf{x_j}$ is a $3 \times 1$ column vector denoting the coordinate estimate of node $j$. As a result, both $\mathbf{\xi_i^r}$ and $\mathbf{y_i}$ are $3 \times 1$ column vectors. The LS Algorithm is then used to compute the components of $\mathbf{y} = \bar{M}^\top \bar{M} X$ or $\mathbf{y} = \bar{M}^\top \bar{B} X$, where $X$ is an arbitrary $3(n - 4) \times 1$ column vector consisting of several $3 \times 1$ vectors as its components, with the $j$th $3 \times 1$ vector as $\mathbf{x_j}$. Since $\bar{M}$ is composed of barycentric coordinates of free nodes, and $\bar{B}$ consists of barycentric coordinates of anchor nodes, we can differentiate the computation of $\bar{M}X$ or $\bar{B}X$ by selecting only free nodes or anchor nodes during the calculation of $\xi$ in line 2. More specifically, we compute the components of $\mathbf{\xi} = \bar{M} X$ when $j \in \{ \mathcal{N}_i \cup i\} \cap \mathcal{V}_f \cap \tetrahedron_i^r$ and $\mathbf{\xi} = \bar{B} X$ when $j \in \{ \mathcal{N}_i \cup i\} \cap \mathcal{V}_a \cap \tetrahedron_i^r$, which will be specially noted when calling this subalgorithm. Note that the superscript $r$ indicates that node $i$ belongs to $r$th clique of order 5, more details can be found in Section~\ref{SubSec: Node Localizability Condition}.

\floatname{algorithm}{Subalgorithm}
\begin{algorithm}[H]
	\caption{LS Algorithm}
	\label{Alg: Local Sum Algorithm}
	\renewcommand{\algorithmicrequire}{\textbf{$\bullet$}}
	\begin{algorithmic}[1]
		\REQUIRE \textbf{Function:} $\mathbf{y_i} = \textbf{LS}(\mathbf{x_j}, j \in \mathcal{N}_i)$
		\STATE Receive $\mathbf{x_j}$ from its neighbour $j \in \mathcal{N}_i$.
		\STATE Compute
		$$
		\begin{gathered}
			\vdots \\
			\mathbf{\xi_i^r} = \sum_{ j \in \{ \mathcal{N}_i \; \cup \; i\} \; \cap \; \tetrahedron_i^r } -a_{i j}^r \mathbf{x_j} \\
			\vdots
		\end{gathered}
		$$
		\STATE Receive $a_{j i}^r \mathbf{\xi_j^r}$ from its neighbour $j \in \mathcal{N}_i$.
		\STATE Compute
		$$
		\mathbf{y_i} = \sum_{ j \in \{ \mathcal{N}_i \; \cup \; i\} \; \cap \; \mathcal{V}_f, \forall r } - a_{j i}^r \mathbf{\xi_j^r}.
		$$
		\STATE \textbf{Return} $\mathbf{y_i}$.
	\end{algorithmic}
\end{algorithm}

\subsubsection{Finite-Time K-Max-Consensus Sum}

Next, we present the Finite-Time K-Max-Consensus Sum (FKMS) Algorithm, along with its associated function, the K-Max-Consensus (KMC) Algorithm. These algorithms are devised to achieve sum consensus among all nodes of the sensor network. Specifically, the algorithms enable each node to obtain the sum of all the state of the nodes in the network in a distributed fashion through communication. In these two algorithms, $ID_i$ and $x_i$ denote the identifier and state value of node $i$, and $\delta$ denotes the network diameter, which we assumed to be known.

\begin{remark}	
If the diameter $\delta$ is unknown, we can calculate it in a finite number of steps using established algorithms from the literature~\cite{almeida2012fast, peleg2012distributed, oliva2016distributed-b}, which only require an upper bound on the number of nodes that can also be computed in a distributed fashion (see, e.g.,~\cite{varagnolo2010distributed, varagnolo2013distributed, zhang2016distributed}).
\end{remark}

The KMC algorithm, described in the pseudocode in Subalgorithm~\ref{Alg: K-Max-Consensus Algorithm}, is utilized to obtain the $K$ maximum state values among all nodes in the sensor network and their corresponding identifiers. For each node, the entire process involves continuous comparison between its own $K$ maximum state values and those held by its neighbors, ultimately obtaining the $K$ maximum state values and corresponding identifiers. The key lies in the $\text{K}_{\max}$ function, which sorts all state values of the node and its neighbors by value first and then by identifier if there are repeated values. Following this procedure, since the network is connected, each node in the sensor network will finally select the same $K$ maximum values and corresponding identifiers. This iterative process requires $\delta$ steps, and the resulting values are stored in a local variable for each node, denoted by $\bar{\Omega}_i$, $i\in\mathcal V$.

\begin{algorithm}[H]
	\caption{KMC Algorithm}
	\label{Alg: K-Max-Consensus Algorithm}
	\renewcommand{\algorithmicrequire}{\textbf{$\bullet$}}
	\begin{algorithmic}[1]
		\REQUIRE \textbf{Function:} $ \bar{\Omega}_i = \textbf{KMC}\left(\left( ID_i, x_i \right), \delta \right)$
		\REQUIRE \textbf{Initialization:}
		\STATE $\Omega_i(0) = \{(ID_i,x_i)\}$.
	\end{algorithmic}
	\begin{algorithmic}[1]
		\REQUIRE \textbf{Iteration:}
		\FOR{$ k = 1, k \leq \delta, k++$}
		\STATE Receive $\Omega_j(k-1)$ from its neighbor $j \in \mathcal{N}_i$.
		\STATE $\Omega_i(k) = \textbf{K}_{\max} \cup_{ j \in \{ \mathcal{N}_i \; \cup \; i \} } \Omega_j(k-1)$.
		\ENDFOR
		\STATE \textbf{Return} $\bar{\Omega}_i = \Omega_i(k)$.
	\end{algorithmic}
\end{algorithm}

Then, the FKMS Algorithm (whose pseudocode is reported in Subalgorithm~\ref{Alg: Finite-time K-Max-Consensus Sum Algorithm}) uses the KMC Algorithm to enable each node to obtain the sum of all the state values held by all nodes. The overall process is described in the following three core steps of each iteration. First, each node updates its sum value by adding the $K$ maximum values. Second, any state value that has already been included in the sum is assigned a value equal to $- \infty$. Third, the KMC Algorithm is executed to identify the subsequent set of $K$ maximum values. These three steps are repeated iteratively until all state values eventually become equal to $- \infty$. This iterative process guarantees that each node eventually obtains the same sum value, which is referred to as sum consensus in this paper.

In Subalgorithm~\ref{Alg: Finite-time K-Max-Consensus Sum Algorithm}, $x_i^{tmp}$ and $sum_i$ denote the temporary variables of state value and sum of node $i$, respectively. The functions $\text{Keys}(\cdot)$ and $\text{Values}(\cdot)$ refer to the sets of identifiers and their corresponding state values. 
The final consensus sum value is denoted by $\Bar{sum}$. We initialize $\bar{\Omega} = \{(ID_i,x_i)\}$ and $x_i^{tmp} = x_i$ for each node. During the iteration, on line 2 we call the Subalgorithm~\ref{Alg: K-Max-Consensus Algorithm} to get the $K$ maximum state values and their identifiers for $\delta$ steps; then, line 3 updates the sum value by adding the nonnegative infinity values of the $K$ maximum values once. In lines 4--5, we remove the nodes that have completed the accumulation of state values. At last, when all values $z \in \text{Values} \left( \bar{\Omega} \right)$ are $- \infty$, the algorithm terminates and every node obtains the consensus sum result. 

We want to stress that $K$ in Subalgorithm~\ref{Alg: Finite-time K-Max-Consensus Sum Algorithm} and Subalgorithm~\ref{Alg: K-Max-Consensus Algorithm} is a value that considers the tradeoff between memory size and iteration steps: the greater the value of $K$, the more memory it is used, but the less the steps are required. Therefore, $K$ can be selected according to the specific performance of sensors and demand of the application considered. 

\begin{algorithm}[H]
	\caption{FKMS Algorithm}
	\label{Alg: Finite-time K-Max-Consensus Sum Algorithm}
	\renewcommand{\algorithmicrequire}{\textbf{$\bullet$}}
	\begin{algorithmic}[1]
		\REQUIRE \textbf{Function:} $ \bar{sum} = \textbf{FKMS} \left( \left(ID_i, x_i \right), \delta, K \right) $
	\end{algorithmic}
	\begin{algorithmic}[1]
		\REQUIRE \textbf{Initialization:}
		\STATE $\bar{\Omega} = \{(ID_i,x_i)\}$.
		\STATE $x_i^{tmp} = x_i$.
		\STATE $sum_i = 0$.
	\end{algorithmic}
	\begin{algorithmic}[1]
		\REQUIRE \textbf{Iteration:} 
		\WHILE{$z \in \text{Values} \left( \bar{\Omega} \right) > - \infty$}
		\STATE $\bar{\Omega} = \textbf{KMC}\left( \left(ID_i, x_i^{tmp} \right), \delta \right)$.
		\STATE $sum_i = sum_i + \sum_{z \in \text{Values} \left( \bar{\Omega} \right) \text{ and } z > - \infty} z $.
		\IF {$ID_i \in \text{Keys} \left( \bar{\Omega} \right)$ or $x_i^{tmp} = - \infty $}
		\STATE $x_i^{tmp} = - \infty$.
		\ENDIF
		\ENDWHILE
		\STATE \textbf{Return} $\bar{sum} = sum_i$.
	\end{algorithmic}
\end{algorithm}

\begin{remark}
The tradeoff between memory size and iteration steps through the parameter $K$ implemented in the FKMS Subalgorithm can also be encapsulated within other distributed consensus algorithms, including, e.g., average consensus~\cite{chapter_handbook}.
\end{remark}

\subsubsection{Distributed Localizablity Verification}

Finally, we present the Distributed Localizablity Verification, whose goal is to check the localizability of each node. In accordance with Theorem~\ref{Tho: Localizablity}, considering that eigenvectors corresponding to eigenvalues equal to $0$ reside in the kernel set of a matrix, we address this localizability verification problem by conducting distributed computations of the matrix eigenvalues and eigenvectors of matrix $M^\top M$. Specifically, our algorithm, whose pseudocode is reported in Algorithm~\ref{Alg: Distributed Localizablity Verification Algorithm}, computes for the eigenvalues of the matrix $M^\top M$ and calculates the $i$th component of each eigenvector at node $i$.

In Algorithm~\ref{Alg: Distributed Localizablity Verification Algorithm}, the superscript $\top$ indicates it is a row vector and subscript $s$ indicates the $s$th component of the vector, for example, $v_{i, s}^\top$ denotes the $s$th component of the row vector $\mathbf{v_i}^\top$. In the initialization, the initial value $\mathbf{v_i}^\top(0)$ of each node is set randomly. During the iteration, in line 1 we compute the components of $U = M^\top M V$ using the LS Algorithm, where matrix $V = \left[ \mathbf{v_1}, \cdots, \mathbf{v_{n-4}} \right]\top$ denotes the estimate of the matrix formed by the $n - 4$ eigenvectors of $M^\top M$ as columns. Then, in line 2, we use the FKMS Algorithm to compute the $n - 4$ eigenvalues using the Rayleigh quotient formula~\cite{horn2012matrix,parlett1998symmetric}. This step allows each node to compute the eigenvalues through a distributed computation process. In line 3, we obtain the consensus matrix $O = U^\top U$ by computing all of its elements, using again the FKMS Algorithm. Then, in lines 4 and 5, we perform the distributed orthonormalization of $U$, i.e., $U^\top U = (V R)^\top (V R) = R^\top R$, where $V$ and $R$ are the QR factorization of $U$. Once the eigenvalues are deemed converged, each node $i$ has access to $\mathbf{v_i}^\top$, which represents the $i$th component of all eigenvectors of $M^\top M$. This information is precisely what is needed to check the localizability verification condition for node $i$ from Theorem~\ref{Tho: Localizablity}. For practical applications, we set a maximum number of iterations ${\rm Iter}_{\max}$ and we reformulate the conditions as $\lambda_{s}({\rm Iter}_{\max}) < \varepsilon_{1}$ and $\left| v_{i,s}^\top({\rm Iter}_{\max}) \right| > \varepsilon_{2}$, where $\varepsilon_{1} > 0$ and $\varepsilon_{2} > 0$ are small threshold values. Fig.~\ref{Fig: flow chart} reports a flow chart of the algorithm.

\floatname{algorithm}{Algorithm}
\begin{algorithm}[!hbt]
	\caption{Distributed Localizablity Verification Algorithm}
	\label{Alg: Distributed Localizablity Verification Algorithm}
	\renewcommand{\algorithmicrequire}{\textbf{$\bullet$}}
	\begin{algorithmic}[1]
		\REQUIRE \textbf{Initialization:} $(\forall i = 1, \cdots, n-4)$ :
		\STATE $\mathbf{v_i}^\top(0) = $ random initial $(n - 4)$ dimensional row vector.
	\end{algorithmic}
	\begin{algorithmic}[1]
		\REQUIRE \textbf{Iteration:} 
		with $k$ starting at 0 :
		\STATE $\mathbf{u_i}^\top(k) = $ \textbf{LS} $\left( \mathbf{v_j}^\top(k), j \in \{ \mathcal{N}_i \cup i\} \cap \mathcal{V}_f\right)$.
		\STATE $\lambda_s(k) = \frac{\textbf{FKMS} \left( \left(ID_i,v_{i, s}^\top(k) u_{i, s}^\top(k) \right), \delta, K \right)}{\textbf{FKMS} \left( \left(ID_i, v_{i, s}^\top(k) v_{i, s}^\top(k) \right), \delta, K\right)}, \forall s = 1, \cdots, n-4$.
		\STATE $O(k)= \textbf{FKMS} \left( \left(ID_i, \mathbf{u_i}(k) \mathbf{u_i}^\top(k) \right), \delta, K \right)$.
		\STATE Solve $R(k)$ from $R^\top(k) R(k) = O(k)$. 
		\STATE $\mathbf{v_i}^\top(k+1) = \mathbf{u_i}^\top(k) R^{-1}(k)$.
		\STATE $k = k+1$.
	\end{algorithmic}
	\begin{algorithmic}[1]
		\REQUIRE \textbf{Localizability verification:}
		\IF {$\exists s$ such that $\lambda_{s}({\rm Iter}_{\max}) < \varepsilon_{1}$ but $\left| v_{i, s}^\top({\rm Iter}_{\max}) \right| > \varepsilon_{2}$,}
		\STATE node $i$ is unlocalizable;
		\ELSE
		\STATE node $i$ is localizable.
		\ENDIF
	\end{algorithmic}
\end{algorithm}

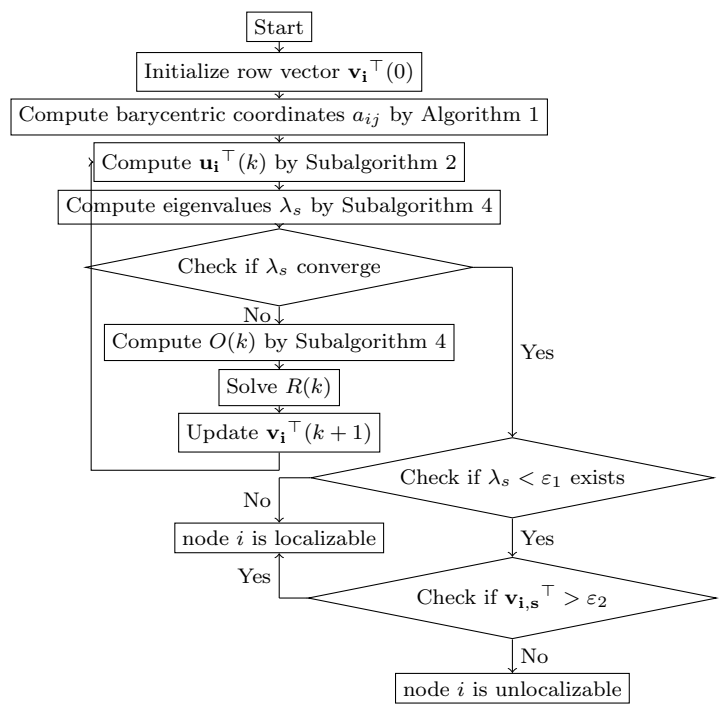
\begin{figure}[!htbp]
\centering
\scriptsize
\tikzstyle{format}=[rectangle, draw, thin, fill = white]
\tikzstyle{test}=[diamond, aspect=5, draw, thin]
\tikzstyle{point}=[coordinate, on grid]
\begin{tikzpicture}
	\node[format] (start){Start};
	\node[format, below of = start, node distance = 6mm] (Initialize v){Initialize row vector $\mathbf{v_i}^\top(0)$};
	\draw[->] (start)--(Initialize v);
	\node[format, below of = Initialize v, node distance = 6mm] (compute aij){Compute barycentric coordinates $a_{ij}$ by Algorithm~\ref{Alg: Congruent Framework Construction Algorithm}};
	\draw[->](Initialize v)--(compute aij);
	\node[format, below of = compute aij, node distance = 6mm] (compute u){Compute $\mathbf{u_i}^\top(k)$ by Subalgorithm~\ref{Alg: Local Sum Algorithm}};
	\draw[->](compute aij)--(compute u);
	\node[format, below of = compute u, node distance = 6mm] (compute lambda){Compute eigenvalues $\lambda_{s}$ by Subalgorithm~\ref{Alg: Finite-time K-Max-Consensus Sum Algorithm}};
	\draw[->](compute u)--(compute lambda);
 
	\node[test, below of = compute lambda, node distance = 8mm](check converge){Check if $\lambda_{s}$ converge};
	\draw[->](compute lambda)--(check converge);
 
	\node[format, below of = check converge, node distance = 10mm](compute O){Compute $O(k)$ by Subalgorithm~\ref{Alg: Finite-time K-Max-Consensus Sum Algorithm}};
	\draw[->](check converge) --node[left]{No} (compute O);
	\node[format, below of = compute O, node distance = 6mm](solve R){Solve $R(k)$};
	\draw[->](compute O)--(solve R);
	\node[format, below of = solve R, node distance = 6mm](update v){Update $\mathbf{v_i}^\top(k+1)$};
	\draw[->](solve R)--(update v);
	\node[point, below of = update v, node distance = 5mm](point below of update v){};
	\draw[-](update v)--(point below of update v);
	\node[point, left of = compute u, node distance = 25mm](point left of compute u){};
	\draw[-](point below of update v)-|(point left of compute u);
	\draw[->](point left of compute u)--(compute u.west);

	\node[point, right of = check converge, node distance =31mm](point right of check converge){};
	\draw[-](check converge)--(point right of check converge);
	\node[test, below of = point right of check converge, node distance = 28mm](check lambda){Check if $\lambda_{s} < \varepsilon_{1}$ exists};
	\draw[->](point right of check converge)--node[right]{Yes}(check lambda);
	
	\node[point, left of = check lambda, node distance = 31mm](point left of check lambda){};
	\draw[-](check lambda)--(point left of check lambda);
	\node[format, below of = point left of check lambda, node distance = 8mm](localizable){node $i$ is localizable};
	\draw[->](point left of check lambda)--node[left]{No}(localizable);
	
	\node[test, below of = check lambda, node distance = 16mm](check v){Check if $\mathbf{v_{i,s}}^\top > \varepsilon_{2}$};
	\draw[->](check lambda)--node[right]{Yes}(check v);
	
	\node[point, left of = check v, node distance = 31mm](point left of check v){};
	\draw[-](check v)--(point left of check v);
	\draw[->](point left of check v)--node[left]{Yes}(localizable);
	
	\node[format, below of = check v, node distance = 12mm](unlocalizable){node $i$ is unlocalizable};
	\draw[->](check v)--node[right]{No}(unlocalizable);
\end{tikzpicture}
\caption{Flow chart of the Distributed Localizablity Verification Algorithm. }
\label{Fig: flow chart}
\end{figure}

\begin{remark}
	We want to stress that the Distributed Localizability Verification Algorithm with the KMC and FKMS Algorithms can be used not only to solve Problem~\ref{problem1} in this setting, but their applicability can also be extended to all localization problems employing barycentric coordinates; namely, in two or three dimensional space with various types of measurements. For instance, it can be adopted before running the localization algorithms in~\cite{diao2014barycentric, cheng2016single, han2017barycentric} for the sensor network with unlocalizable nodes. The key lies in modifying the LS Algorithm according to the barycentric coordinates obtained in the specific scenario considered. Thus, our Distributed Localizability Verification Algorithm is a general distributed algorithm for testing the localizability of each node in the sensor network using barycentric coordinates.
\end{remark}

\section{Finite-Time Localization Algorithm}  \label{Sec: Localization}

\subsection{Distributed Localization Algorithm}

Using the Distributed Localizablity Verification Algorithm in Algorithm~\ref{Alg: Distributed Localizablity Verification Algorithm}, each node is able to determine whether it is localizable or not. Then, if node $i$ is not localizable, all cliques containing node $i$ in the sensor network are removed, as well as the corresponding linear equations. At this stage, we define $\mathcal{V}_z$ as the set of $n_z$ localizable free nodes, forming a sub-graph $\mathcal G_z=(\mathcal V_z,\mathcal E_z)$ with $\mathcal{E}_z:={(i,j)\in\mathcal{E}: i,j\in\mathcal{V}_z}$. Additionally, we make the following assumption.
\begin{assumption}\label{a:connected}
    The sub-graph $\mathcal G_z=(\mathcal V_z,\mathcal E_z)$ is connected.
\end{assumption}
Let $\delta_z$ be the diameter of this sub-graph. Under Assumption~\ref{a:connected}, we observe that the following inequalities naturally hold:
\begin{equation}
\label{Eq: ineq}
    n_z\leq n-4<n,\quad \delta_z\leq n_z-1<n.
\end{equation}

The reduced set of linear equations from \eqref{Eq: linear constraint - MM} that involve only nodes belonging to $\mathcal{V}_z$ can be written as follows:
\begin{equation}
\label{Eq: linear constraint - MzMz}
	\bar{M}_z^\top \bar{M}_z \mathbf{p_z} = \bar{M}_z^\top \bar{B}_z \mathbf{p_a},
\end{equation}
where the subscript $z$ denotes the matrix or vector established by localizable nodes and $\mathbf{p_z}$ denotes the Euclidean coordinate vectors of the localizable node set $\mathcal{V}_z$.

Note that the matrix $\bar{M}_z^\top \bar{M}_z$ is symmetric and positive definite since all unlocalizable nodes are removed from the sensor network. Therefore, we propose our distributed localization algorithm to solve \eqref{Eq: linear constraint - MzMz} by leveraging the conjugate gradient method~\cite{hestenes1952methods}. In our Distributed Localization Algorithm (whose pseudocode is reported in Algorithm~\ref{Alg: Distributed Localization Algorithm}), each node obtains an initial value $r_i(0)$, which is equal to the $i$th component of
\begin{equation}
\label{Eq: r(0)}
    \mathbf{r}(0) \! 
= \! \bar{M}_z^\top \bar{B}_z \mathbf{p_a} \! - \! \bar{M}_z^\top \bar{M}_z \mathbf{\hat{p}_z}(0) \! 
= \! - \bar{M}_z^\top 
\hspace{-.1cm}\begin{bmatrix} \bar{M}_z \! - \! \bar{B}_z \end{bmatrix} \hspace{-.1cm}\begin{bmatrix} \mathbf{\hat{p}_z}(0) \\ \mathbf{p_a} \end{bmatrix}\hspace{-.1cm},
\end{equation}
where $\mathbf{\hat{p}_z}(0)$ denotes the initial estimate of the Euclidean coordinates of localizable nodes. During the iteration, in line 1 we use the LS Algorithm to compute $q_i(k)$ for node $i$, which is the $i$th components of $\mathbf{q}(k) = \bar{M}_z^\top \bar{M}_z \mathbf{v}(k)$. Then, in lines 2 and 5, we update the step sizes $\alpha_i(k)$ and $\beta_i(k)$. Note that $\alpha_i(k)$ and $\beta_i(k)$ are definitely constants consistent for all nodes, so they can be uniformly denoted as $\alpha(k)$ and $\beta(k)$ and computed as
\begin{equation}
\label{Eq: alpha and beta}
    \alpha(k) = \frac{\mathbf{r}^\top(k) \mathbf{r}(k)}{\mathbf{v}^\top(k) \mathbf{q}(k)},
\quad
\beta(k) = \frac{\mathbf{r}^\top(k+1) \mathbf{r}(k+1)}{\mathbf{r}^\top(k) \mathbf{r}(k)},
\end{equation}
in a distributed fashion by means of the FKMS Algorithm. Meanwhile, in lines 3, 4, and 6, we update the estimate $\mathbf{\hat{p}_i}(k)$ and parameters $\mathbf{r_i}(k)$ and $\mathbf{v_i}(k)$. Note that, the first two quantities are updated after computing the step $\alpha_i(k)$ since they require the updated values of $\alpha_i(k)$, and before computing the step $\beta_i(k)$, for which they are needed. Similar, the latter quantity ($\mathbf{v_i}(k)$) is updated at the end of the iteration, using the updated value of the step $\beta_i(k)$, computed in line 5.

\begin{algorithm}[H]
	\caption{Distributed Localization Algorithm}
	\label{Alg: Distributed Localization Algorithm}
	\renewcommand{\algorithmicrequire}{\textbf{$\bullet$}}
	\renewcommand{\algorithmicensure}{\textbf{Output:}}
	\begin{algorithmic}[1]
		\REQUIRE \textbf{Initialization:} $(\forall i \in \mathcal{V}_z)$
		\STATE $\mathbf{\hat{p}_i}(0) = $ random initial $3$ dimensional column vector.
		\STATE $\mathbf{\hat{p}_{n-3}}(0) = \mathbf{p_{n-3}}, \mathbf{\hat{p}_{n-2}}(0) = \mathbf{p_{n-2}}, \mathbf{\hat{p}_{n-1}}(0) = \mathbf{p_{n-1}}$ and $\mathbf{\hat{p}_n}(0) = \mathbf{p_n}$.
		\STATE $\mathbf{r_i}(0) = - \textbf{LS} \left( \mathbf{\hat{p}_j}(0), j \in \{ \mathcal{N}_i \cup i\} \cap \left( \mathcal{V}_z \cup \mathcal{V}_a \right) \right)$.
		\STATE $\mathbf{v_i}(0) = \mathbf{r_i}(0)$.
	\end{algorithmic}
	\begin{algorithmic}[1]
		\REQUIRE \textbf{Iteration:} with $k$ starting at $0$
		\STATE $\mathbf{q_i}(k) = $ \textbf{LS} $\left(\mathbf{v_j}(k), j \in \{ \mathcal{N}_i \cup i\} \cap \mathcal{V}_z\right)$.
		\STATE $\alpha_i(k) = \frac{\textbf{FKMS} \left( \left(ID_i, \mathbf{r_i}^\top(k) \mathbf{r_i}(k) \right), \delta_z, K \right)}{\textbf{FKMS} \left( \left(ID_i, \mathbf{v_i}^\top(k) \mathbf{q_i}(k) \right), \delta_z, K \right)}$.
		\STATE $\mathbf{\hat{p}_i}(k+1) = \mathbf{\hat{p}_i}(k) + \alpha_i(k) \mathbf{v_i}(k)$.
		\STATE $\mathbf{r_i}(k+1) = \mathbf{r_i}(k)-\alpha_i(k) \mathbf{q_i}(k)$.
		\STATE $\beta_i(k) = \frac{\textbf{FKMS} \left( \left(ID_i, \mathbf{r_i}^\top(k+1) \mathbf{r_i}(k+1) \right), \delta_z, K \right)}{\textbf{FKMS} \left( \left(ID_i, \mathbf{r_i}^\top(k) \mathbf{r_i}(k) \right), \delta_z, K \right)}$.
		\STATE $\mathbf{v_i}(k+1) = \mathbf{r_i}(k+1) + \beta_i(k) \mathbf{v_i}(k)$.
		\STATE $k = k+1$.
	\end{algorithmic}
\end{algorithm}

\subsection{Finite-Time Convergence Analysis}

We prove that the Distributed Localization Algorithm in Algorithm~\ref{Alg: Distributed Localization Algorithm} solves Problem~\ref{problem2}, i.e., that the positions estimated by the algorithm converge to the real positions in a finite number of steps. The analysis consists of two parts. First, we prove that the FKMS Algorithm can obtain the step sizes $\alpha$ and $\beta$ in finite time. Second, using this result, we prove finite-time convergence of Algorithm~\ref{Alg: Distributed Localization Algorithm}.

\subsubsection{Finite-time computation of the step size}

We start by proving the following lemma.

\begin{lemma} \label{Lem: FKMS - Convergence}
	Under Assumption~\ref{a:connected}, for any $i \in \mathcal V_z$ in the sensor network, $\bar{sum}_i$ generated by the FKMS Algorithm in Subalgorithm~\ref{Alg: Finite-time K-Max-Consensus Sum Algorithm} converges to the sum of the state values in finite time. Precisely, the algorithm takes no more than $\delta_z \left( \left\lceil \frac{n_z}{K} \right\rceil + 1 \right)$ steps to converge.
\end{lemma}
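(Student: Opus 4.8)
The plan is to decompose the argument into two parts: first establishing the correctness and termination time of the KMC subroutine together with the consistency of the node removals it induces, and then counting the number of while-loop iterations together with the per-iteration communication cost.

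First I would analyze the KMC Algorithm (Subalgorithm~\ref{Alg: K-Max-Consensus Algorithm}) in isolation. The claim I would prove by induction on the iteration index $k$ is that, for every node $i\in\mathcal V_z$, the set $\Omega_i(k)$ equals the collection of the (at most $K$) largest state values---ordered by value, with ties broken by identifier through the $\mathrm{K}_{\max}$ function---held by the nodes lying within graph distance $k$ of $i$. The base case $k=0$ is immediate from the initialization $\Omega_i(0)=\{(ID_i,x_i)\}$. For the inductive step I would use that the union $\bigcup_{j\in\{\mathcal N_i\cup i\}}\Omega_j(k-1)$ ranges exactly over all nodes within distance $k$ of $i$, combined with the elementary observation that if a value belongs to the top $K$ of a set, then it still belongs to the top $K$ of any subset that contains it; this guarantees that truncating each neighbor's list to its own top $K$ never discards a value that belongs to the global top $K$ of the enlarged neighborhood, so the per-step truncation is lossless. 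Evaluating the invariant at $k=\delta_z$---at which point every node lies within distance $\delta_z$ of $i$ by the definition of the diameter under Assumption~\ref{a:connected}---shows that each node terminates KMC after exactly $\delta_z$ communication steps with the \emph{same} output set $\bar\Omega_i$ consisting of the $K$ globally largest remaining values. This common, network-wide agreement is the key property, since it ensures that in each FKMS round precisely the same $K$ nodes are identified and, via lines 4--5, simultaneously reset to $-\infty$.

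Next I would turn to the FKMS Algorithm (Subalgorithm~\ref{Alg: Finite-time K-Max-Consensus Sum Algorithm}) and establish that the accumulated quantity is correct. Because each KMC call returns the $K$ largest remaining values and the corresponding nodes are then reset to $-\infty$, every state value is selected in exactly one round and hence added to $sum_i$ precisely once: none is double counted and none is omitted. Since each round eliminates $K$ of the $n_z$ values (fewer only in the final nonempty batch, where all the remaining real values are captured together with $-\infty$ fillers, which the guard $z>-\infty$ in line~3 excludes from the sum), after $\lceil n_z/K\rceil$ rounds all state values have been replaced by $-\infty$, so $sum_i$ equals the total sum and is identical at every node.

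Finally I would count steps. Each execution of the while loop invokes KMC once, which by the first part costs exactly $\delta_z$ communication steps, while the sum update and the resets are purely local. The $\lceil n_z/K\rceil$ rounds that actually accumulate values are followed by one further round whose KMC call returns only $-\infty$ entries; the subsequent evaluation of the loop guard $z\in\mathrm{Values}(\bar\Omega)>-\infty$ then fails and the algorithm terminates, and this terminating round also costs $\delta_z$ steps. Summing yields the claimed bound $\delta_z\!\left(\lceil n_z/K\rceil+1\right)$. I expect the main obstacle to be the inductive correctness proof for KMC---specifically the rigorous verification that truncating to the top $K$ values at every step is lossless and that all nodes converge to an identical top-$K$ set---because everything downstream (the single counting of each value, the round count, and the $+1$ termination round) follows cleanly once the common, consistent output of KMC is secured.
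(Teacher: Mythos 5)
Your proposal is correct and follows essentially the same route as the paper's proof: establish that KMC yields network-wide agreement on the $K$ largest remaining values within $\delta_z$ communication steps, observe that the consistent resets to $-\infty$ ensure each state value is accumulated exactly once, and count $\left\lceil \frac{n_z}{K} \right\rceil$ accumulation rounds plus one terminating round to obtain the bound $\delta_z \left( \left\lceil \frac{n_z}{K} \right\rceil + 1 \right)$. The only difference is one of rigor, in your favor: you prove the KMC invariant by induction on the distance-$k$ neighborhood, including the losslessness of the per-step top-$K$ truncation, whereas the paper dismisses this step as ``easy to observe.''
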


\begin{proof}
Since the sub-graph is connected by Assumption~\ref{a:connected}, it is easy to observe that the KMC Algorithm succeeds within $\delta_z$ steps during the first iteration and the $K$ maximum initial state values and their corresponding identifiers are obtained by all nodes. We denote them as $\bar{\Omega} = \textbf{KMC} \left( \left( ID_i, x_i(0) \right), \delta_z \right)$. Note that if state values are repeated, the largest identifiers are selected. After adding the $K$ maximum initial state values to $sum_i$, the algorithm sets the temporary state values $x_i^{tmp}$ of all nodes whose identifier is contained in $\text{Keys} \left( \bar{\Omega} \right)$ to $-\infty$, aiming to eliminate their value from subsequent KMC computations. Similar, in the following iterations, the sets $\bar{\Omega} = \textbf{KMC} \left( \left( ID_i, x_i^{tmp} \right) , \delta_z \right)$ are computed within $\delta_z$ steps each, where $\text{Values} \left( \bar{\Omega} \right) $ are the $K$ maximum values of the initial state values except previous sets of $K$ maximum values and $\text{Keys} \left( \bar{\Omega} \right)$ are their identifiers. 

Hence, in each iteration, each node adds $K$ maximum initial state values to $sum_i$. Once all initial state values have been accumulated, we have $x_i^{tmp} = - \infty$ for all nodes and $\text{Values} \left( \bar{\Omega} \right) = \{ -\infty, \cdots, -\infty \}$. At this time, the algorithm terminates simultaneously for all nodes. This requires a number of iterations equal to the smallest integer that is larger than or equal to $\frac{n_z}{K}$. Furthermore, an additional step is needed to ensure that all the $x_i^{tmp}$ values are equal to $- \infty$. Therefore, a total of $\left\lceil \frac{n_z}{K} \right\rceil + 1$ steps of the main routine of Subalgorithm~\ref{Alg: Finite-time K-Max-Consensus Sum Algorithm} are required. Moreover, it should be noted that every node has the same value for $sum_i$ at each step due to the consensus properties.

In conclusion, the KMC Algorithm requires $\delta_z$ steps for each iteration and the FKMS Algorithm needs to be executed no more than $\left\lceil \frac{n_z}{K} \right\rceil + 1$ times, yielding the claim. 
\end{proof}

In the light of Lemma~\ref{Lem: FKMS - Convergence}, we can draw a conclusion that for any node $i \in \mathcal{V}_z$, the numerator and denominator of $\alpha_i$ and $\beta_i$ can be calculated simultaneously in at most $\delta_z \left( \left\lceil \frac{n_z}{K} \right\rceil + 1 \right)$ steps. Therefore, the step sizes $\alpha$ and $\beta$ can be obtained in a finite number of steps, and the final results are shown as \eqref{Eq: alpha and beta}.

\subsubsection{Finite-time convergence of Algorithm~\ref{Alg: Distributed Localization Algorithm}}

After ensuring that the step size is obtained within a finite number of steps, we show that the Distributed Localization Algorithm converges in a finite number of iterations.

First, for all nodes $i$, $i \in \mathcal{V}_z$, the discrete-time system in Algorithm~\ref{Alg: Distributed Localization Algorithm} can be written in the following compact matrix 
\begin{subequations} \label{Eq: conjugate gradient}
	\begin{align}
		& \mathbf{\hat{p}_z}(k+1) = \mathbf{\hat{p}_z}(k) + \alpha(k) \mathbf{v}(k), \label{Eq: conjugate gradient-c} \\
		& \mathbf{r}(k+1) = \mathbf{r}(k) - \alpha(k) \bar{M}_z^\top \bar{M}_z \mathbf{v}(k), \label{Eq: conjugate gradient-d} \\
		& \mathbf{v}(k+1) = \mathbf{r}(k+1) + \beta(k) \mathbf{v}(k), \label{Eq: conjugate gradient-f} \\
		\intertext{with step sizes}
		& \alpha(k) = \frac{\mathbf{r}^\top(k) \mathbf{r}(k)}{\mathbf{v}^\top(k) \bar{M}_z^\top \bar{M}_z \mathbf{v}(k)}, \label{Eq: conjugate gradient-b} \\
		& \beta(k) = \frac{\mathbf{r}^\top(k+1) \mathbf{r}(k+1)}{\mathbf{r}^\top(k) \mathbf{r}(k)}, \label{Eq: conjugate gradient-e}
	\end{align}
\end{subequations}
by substituting $\mathbf{q}(k) = \bar{M}_z^\top \bar{M}_z \mathbf{v}(k)$ into \eqref{Eq: alpha and beta}. In the sequel, we denote $A = \bar{M}_z^\top \bar{M}_z$.

Then, before providing the proof, we present a lemma that demonstrates the orthogonality of the residuals and conjugacy of the search direction, considering~\cite{hestenes1952methods}.

\begin{lemma} \label{Lem: Orthogonal Conjugate}
    The residuals $\mathbf{r}(0), \mathbf{r}(1), \dots$ generated by \eqref{Eq: conjugate gradient-d} are mutually orthogonal and the direction vectors $\mathbf{v}(0), \mathbf{v}(1), \dots$ generated by \eqref{Eq: conjugate gradient-f} are mutually conjugate with respect to the matrix $A = \bar{M}_z^\top \bar{M}_z$. 
\end{lemma}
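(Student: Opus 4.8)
The plan is to prove Lemma~\ref{Lem: Orthogonal Conjugate} by induction on the iteration index $k$, following the classical conjugate gradient argument of Hestenes and Stiefel~\cite{hestenes1952methods}. The claim has two parts that must be established jointly, since each feeds into the other: the mutual orthogonality of the residuals, $\mathbf{r}^\top(i)\mathbf{r}(j)=0$ for $i\neq j$, and the mutual $A$-conjugacy of the search directions, $\mathbf{v}^\top(i)A\mathbf{v}(j)=0$ for $i\neq j$, where $A=\bar{M}_z^\top\bar{M}_z$. The induction hypothesis at step $k$ will be that both relations hold for all indices up to $k$; I will then use the update rules \eqref{Eq: conjugate gradient-d} and \eqref{Eq: conjugate gradient-f} together with the definitions of the step sizes $\alpha$ and $\beta$ in \eqref{Eq: conjugate gradient-b} and \eqref{Eq: conjugate gradient-e} to extend both to index $k+1$.

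First I would record the one-step identities that the step sizes are designed to enforce. Using \eqref{Eq: conjugate gradient-d} and the definition of $\alpha(k)$ in \eqref{Eq: conjugate gradient-b}, a direct computation gives $\mathbf{r}^\top(k+1)\mathbf{r}(k)=\mathbf{r}^\top(k)\mathbf{r}(k)-\alpha(k)\mathbf{v}^\top(k)A\mathbf{v}(k)$, and by the choice of $\alpha(k)$ this vanishes, establishing that consecutive residuals are orthogonal. Similarly, using \eqref{Eq: conjugate gradient-f} and the definition of $\beta(k)$, one checks that $\mathbf{v}^\top(k+1)A\mathbf{v}(k)=0$, giving $A$-conjugacy of consecutive directions. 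The key bridge between the two statements is the span relation: the induction should maintain that $\mathrm{span}\{\mathbf{r}(0),\dots,\mathbf{r}(k)\}=\mathrm{span}\{\mathbf{v}(0),\dots,\mathbf{v}(k)\}=\mathrm{span}\{\mathbf{r}(0),A\mathbf{r}(0),\dots,A^{k}\mathbf{r}(0)\}$, i.e.\ the Krylov subspace $\mathcal K_{k+1}(A,\mathbf{r}(0))$. From \eqref{Eq: conjugate gradient-f} the directions and residuals generate the same subspaces, and from \eqref{Eq: conjugate gradient-d} applying $A$ to a direction stays within the next Krylov subspace.

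The inductive step then proceeds as follows. To show $\mathbf{r}(k+1)\perp\mathbf{r}(j)$ for all $j\le k$, I write $\mathbf{r}(k+1)=\mathbf{r}(k)-\alpha(k)A\mathbf{v}(k)$ and take the inner product with $\mathbf{r}(j)$. The term $\mathbf{r}^\top(k)\mathbf{r}(j)$ vanishes for $j<k$ by the induction hypothesis and is handled by the consecutive-orthogonality identity for $j=k$; the term $\alpha(k)\mathbf{r}^\top(j)A\mathbf{v}(k)$ is controlled by writing $\mathbf{r}(j)$ as a combination of the $\mathbf{v}(0),\dots,\mathbf{v}(j)$ (via the span relation) and invoking the $A$-conjugacy hypothesis, so that only the $j=k$ boundary term survives and is cancelled by the choice of $\alpha(k)$. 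The conjugacy of $\mathbf{v}(k+1)$ against earlier directions is established symmetrically: expanding $\mathbf{v}(k+1)=\mathbf{r}(k+1)+\beta(k)\mathbf{v}(k)$, taking the $A$-inner product with $\mathbf{v}(j)$, and rewriting $A\mathbf{v}(j)$ as a telescoping combination of $\mathbf{r}(j)$ and $\mathbf{r}(j+1)$ through \eqref{Eq: conjugate gradient-d}, which reduces everything to the freshly proved residual orthogonality plus the $\beta(k)$ cancellation at the boundary term $j=k$.

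The main obstacle I anticipate is the careful bookkeeping of the two boundary terms at $j=k$, which is precisely where the definitions of $\alpha(k)$ and $\beta(k)$ are used and where the two halves of the statement become interdependent. The elegant way to avoid a tangled case analysis is to carry the Krylov-subspace span identity as an explicit third clause of the induction hypothesis; once that is in place, the cross terms $\mathbf{r}^\top(j)A\mathbf{v}(k)$ and $\mathbf{v}^\top(j)A\mathbf{v}(k+1)$ collapse cleanly because $A$ maps the relevant Krylov subspace one level up, and every inner product either lands inside a lower subspace (where the hypothesis applies) or is exactly the consecutive term annihilated by the step-size choice. The remaining computations are the routine algebraic identities of the conjugate gradient method and require no further ingredients beyond symmetry and positive definiteness of $A$, which hold since $\bar{M}_z$ has full column rank after the unlocalizable nodes are removed.
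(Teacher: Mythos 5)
Your proposal is correct, and it reproduces the skeleton of the paper's proof: the classical Hestenes--Stiefel two-phase induction in which one first adjoins $\mathbf{v}(k)$ and then $\mathbf{r}(k+1)$, with the definitions of $\alpha(k)$ in \eqref{Eq: conjugate gradient-b} and $\beta(k)$ in \eqref{Eq: conjugate gradient-e} consumed exactly at the boundary terms $g=k$. Where you genuinely differ is in the auxiliary invariants. The paper enlarges the induction hypothesis to the four families \eqref{Eq: relations-a}--\eqref{Eq: relations-d}, explicitly tracking the mixed products $\mathbf{v}^\top(g)\mathbf{r}(f)$ and $\mathbf{r}^\top(g)A\mathbf{v}(f)$, and its workhorse is the closed-form expansion \eqref{Eq: vk} of $\mathbf{v}(k)$ as a weighted sum of residuals, which is what collapses the cross terms $\mathbf{r}^\top(g)\mathbf{v}(k)$ and $\mathbf{v}^\top(g)\mathbf{r}(k+1)$. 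You instead carry the Krylov span identity and dispatch the cross terms with two short substitutions: $\mathbf{r}(j)=\mathbf{v}(j)-\beta(j-1)\mathbf{v}(j-1)$, from \eqref{Eq: conjugate gradient-f}, reduces $\mathbf{r}^\top(j)A\mathbf{v}(k)$ to the conjugacy hypothesis, and the telescoping identity $A\mathbf{v}(j)=\bigl(\mathbf{r}(j)-\mathbf{r}(j+1)\bigr)/\alpha(j)$, from \eqref{Eq: conjugate gradient-d}, reduces $\mathbf{r}^\top(k+1)A\mathbf{v}(j)$ to the freshly proved residual orthogonality. The two routes are logically equivalent; yours is the leaner textbook variant (indeed the full Krylov characterization is more than you need: the membership $\mathbf{r}(j)\in\mathrm{span}\{\mathbf{v}(0),\dots,\mathbf{v}(j)\}$, immediate by induction from \eqref{Eq: conjugate gradient-f}, already suffices), whereas the paper's mixed relations make every intermediate inner product explicit at the price of a heavier hypothesis. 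Two details to state explicitly when writing it out: the telescoping step divides by $\alpha(j)$, so record that $\alpha(j)>0$ whenever $\mathbf{r}(j)\neq\mathbf{0}$ (using $\mathbf{v}^\top(j)\mathbf{r}(j)=\mathbf{r}^\top(j)\mathbf{r}(j)>0$, hence $\mathbf{v}(j)\neq\mathbf{0}$, together with the positive definiteness of $A$ that you correctly invoke at the end); and note that, as in the paper, all relations are asserted only for indices preceding termination, i.e., while the residuals are nonzero.
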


\begin{proof}

The proof of the lemma is equivalent to prove that the following relations hold
\begin{subequations} \label{Eq: relations}
	\begin{align}
		\mathbf{r}^\top(g) \mathbf{r}(f) & = 0 \; (g \neq f), \label{Eq: relations-a} \\ 
		\mathbf{v}^\top(g) A \mathbf{v}(f) & = 0 \; (g \neq f), \label{Eq: relations-b} \\ 
		\mathbf{v}^\top(g) \mathbf{r}(f) = 0 \; (g < f), 
		& \; \mathbf{v}^\top(g) \mathbf{r}(f) = \mathbf{r}^\top(g) \mathbf{r}(g), \label{Eq: relations-c} \\ 
		\mathbf{r}^\top\!(g) A \mathbf{v}(g) \! = \! \mathbf{v}^\top\!(g) A \mathbf{v}(g),
		& \; \mathbf{r}^\top\!(g) A \mathbf{v}(f) \! = \! 0 \; ( \! g \! \neq \! l, g \! \neq \! f \! + \! 1 \! ), \label{Eq: relations-d}
	\end{align}
\end{subequations}
which could be made by induction. First, it easy to get that the vectors $\mathbf{r}(0), \mathbf{v}(0)$ and $\mathbf{r}(1)$ satisfy relations \eqref{Eq: relations} due to
\begin{equation}
\begin{aligned}
	\mathbf{r}&^\top(0) \mathbf{r}(1) = \mathbf{v}^\top(0) \mathbf{r}(1) = \mathbf{v}^\top(0) \left( \mathbf{r}(0) - \alpha(0) A \mathbf{v}(0) \right) \\
	&= \mathbf{r}^\top(0) \mathbf{r}(0) - \alpha(0) \mathbf{v}^\top(0) A \mathbf{v}(0) \\
	& = \mathbf{r}^\top(0) \mathbf{r}(0) - \frac{\mathbf{r}^\top(0) \mathbf{r}(0)}{\mathbf{v}^\top(0) A \mathbf{v}(0)} \left(\mathbf{v}^\top(0) A \mathbf{v}(0)\right) = 0,
\end{aligned}
\end{equation}
by \eqref{Eq: conjugate gradient-d} and \eqref{Eq: conjugate gradient-b}. 

Second, supposing that relations \eqref{Eq: relations} hold for the vectors $\mathbf{r}(0), \cdots, \mathbf{r}(k)$ and $\mathbf{v}(0), \cdots, \mathbf{v}(k-1)$, we prove that $\mathbf{v}(k)$ satisfies relations \eqref{Eq: relations}. To verify that $\mathbf{v}(k)$ can be adjoined to this set, it is necessary to show that
\begin{subequations} \label{Eq: relations - k}
	\begin{align}
		\mathbf{r}^\top(g) \mathbf{v}(k) = \mathbf{r}^\top(k) \mathbf{r}(k) &\quad (g \leq k), \label{Eq: relations - k-a} \\
		\mathbf{v}^\top(g) A \mathbf{v}(k) = 0 &\quad (g < k), \label{Eq: relations - k-b} \\
		\mathbf{r}^\top(k) A \mathbf{v}(g) = \mathbf{v}^\top(k) A \mathbf{v}(g) &\quad (g \leq k, g \neq k - 1). \label{Eq: relations - k-c} 
	\end{align}
\end{subequations}
It can be verified that \eqref{Eq: conjugate gradient-e} and \eqref{Eq: conjugate gradient-f} hold for any $k$ iff
\begin{equation}
\label{Eq: vk}
	\mathbf{v}(k) = \left( \mathbf{r}^\top(k) \mathbf{r}(k) \right) \sum\nolimits_{f=0}^k \frac{\mathbf{r}(f)}{\mathbf{r}^\top(f) \mathbf{r}(f)} . 
\end{equation}
Then for formula~\ref{Eq: relations - k}, \eqref{Eq: relations - k-a} can be derived at once from \eqref{Eq: vk} and \eqref{Eq: relations-a} as follows
\begin{equation}
\begin{array}{l}
    \mathbf{r}^\top(g) \mathbf{v}(k) 
= \mathbf{r}^\top(g) \left( \mathbf{r}^\top(k) \mathbf{r}(k) \sum_{l = 0}^k \frac{\mathbf{r}(f)}{ \mathbf{r}^\top(f) \mathbf{r}(f)}\right)\\
\,\,\,= \mathbf{r}^\top(g) \left(\mathbf{r}^\top(k) \mathbf{r}(k) \frac{\mathbf{r}(g)}{\mathbf{r}^\top(g) \mathbf{r}(g)} \right) 
= \mathbf{r}^\top(k) \mathbf{r}(k).
\end{array}
\end{equation}
To prove \eqref{Eq: relations - k-b}, we use \eqref{Eq: conjugate gradient-d} and find that
\begin{equation}
\begin{array}{l}
\mathbf{r}^\top(g+1) \mathbf{v}(k) = \left( \mathbf{r}(g) - \alpha(g) A \mathbf{v}(g) \right)^\top \mathbf{v}(k)\\
\qquad= \mathbf{r}^\top(g) \mathbf{v}(k) - \alpha(g) \mathbf{v}^\top(g) A^\top \mathbf{v}(k),
\end{array}
\end{equation}
which becomes
\begin{equation}
    \mathbf{r}^\top(k) \mathbf{r}(k) = \mathbf{r}^\top(k) \mathbf{r}(k) - \alpha(g) \mathbf{v}^\top(g) A \mathbf{v}(k)
\end{equation}
by \eqref{Eq: relations - k-a} when $g < k$, then \eqref{Eq: relations - k-b} holds since $\alpha(g) > 0$. In order to establish \eqref{Eq: relations - k-c}, we use \eqref{Eq: conjugate gradient-f} and \eqref{Eq: relations - k-b} to obtain
\begin{equation}
\begin{array}{l}
\mathbf{v}^\top(k) A \mathbf{v}(g) = \left( \mathbf{r}(k) + \beta(k-1) \mathbf{v}(k-1) \right)^\top A \mathbf{v}(g) \\
\qquad= \mathbf{r}^\top(k) A \mathbf{v}(g) + \beta(k-1) \mathbf{v}^\top(k-1) A \mathbf{v}(g) 
\\\qquad= \mathbf{r}^\top(k) A \mathbf{v}(g),
\end{array}
\end{equation}
when $g \neq k - 1$, which follows that \eqref{Eq: relations - k-c} holds. Therefore, relations \eqref{Eq: relations} holds for the vectors $\mathbf{r}(0), \mathbf{r}(1), \cdots, \mathbf{r}(k)$ and $\mathbf{v}(0), \mathbf{v}(1), \cdots, \mathbf{v}(k)$.

Finally, supposing that relations \eqref{Eq: relations} holds for the vectors $\mathbf{r}(0), \cdots, \mathbf{r}(k)$ and $\mathbf{v}(0), \cdots, \mathbf{v}(k - 1)$, we prove $\mathbf{r}(k + 1)$ satisfies relations \eqref{Eq: relations}. With the above results, proving that $\mathbf{r}(k + 1)$ can be adjoined to this set is done by showing that
\begin{subequations} 
\label{Eq: relations - k+1}
	\begin{align}
		\mathbf{r}^\top(g) \mathbf{r}(k+1) & = 0 \quad (g \leq k), \label{Eq: relations - k+1-a} \\
		\mathbf{v}^\top(g) A^\top \mathbf{r}(k+1) & = 0 \quad (g < k), \label{Eq: relations - k+1-b} \\
		\mathbf{v}^\top(g) \mathbf{r}(k+1) & = 0 \quad (g \leq k). \label{Eq: relations - k+1-c}	
	\end{align}
\end{subequations}
By \eqref{Eq: conjugate gradient-d}, we have
\begin{equation}
\label{Eq: term0}
\mathbf{r}^\top(g) \mathbf{r}(k+1) = \mathbf{r}^\top(g) \mathbf{r}(k) - \alpha(k) \mathbf{r}^\top(g) A \mathbf{v}(k).
\end{equation}
When $g < k$, the terms on the right of \eqref{Eq: term0} are both $0$ and then \eqref{Eq: relations - k+1-a} holds. When $g = k$, the right member is still zero
\begin{equation}
\begin{array}{l}
\mathbf{r}^\top(k) \mathbf{r}(k) - \alpha(k) \mathbf{r}^\top(k) A \mathbf{v}(k) \\
\quad= \mathbf{r}^\top(k) \mathbf{r}(k) - \frac{\mathbf{r}^\top(k) \mathbf{r}(k)}{\mathbf{v}^\top(k) A \mathbf{v}(k)} \left(\mathbf{r}^\top(k) A \mathbf{v}(k)\right)\\
\quad= \mathbf{r}^\top(k) \mathbf{r}(k) - \mathbf{r}^\top(k) \mathbf{r}(k) = 0,
\end{array}
\end{equation}
by \eqref{Eq: conjugate gradient-b} and \eqref{Eq: relations-d}. Therefore, \eqref{Eq: relations - k+1-a} holds. Besides, when $g < k$, using \eqref{Eq: conjugate gradient-d} again, we have
\begin{equation}
\begin{array}{l}
0 = \mathbf{r}^\top(k+1)\mathbf{r}(g+1) = \mathbf{r}^\top(k+1)\mathbf{r}(g) - \\
\quad\alpha(g) \mathbf{r}^\top(k+1)A \mathbf{v}(g)= - \alpha(g) \mathbf{r}^\top(k+1)A \mathbf{v}(g),
\end{array}
\end{equation}
hence \eqref{Eq: relations - k+1-b} holds. The equation \eqref{Eq: relations - k+1-c} follows from \eqref{Eq: relations - k+1-a} and the formula \eqref{Eq: vk} for $v(g)$, if $g \leq k$, we have
\begin{equation}
    \mathbf{v}^\top(g) \mathbf{r}(k+1) \! = \! \Bigg( \! \left(\mathbf{r}^\top(g) \mathbf{r}(g)\right) \! \sum_{f=0}^g \frac{\mathbf{r}(f)}{\mathbf{r}^\top(f) \mathbf{r}(f)} \Bigg)^\top \! \mathbf{r}(k+1) \! = \! 0.
\end{equation}
Therefore, \eqref{Eq: relations} hold for the vectors $\mathbf{r}(0), \mathbf{r}(1), \cdots, \mathbf{r}(k), \mathbf{r}(k+1)$ and $\mathbf{v}(0), \mathbf{v}(1), \cdots, \mathbf{v}(k)$.
\end{proof}

Now, based on Lemma~\ref{Lem: Orthogonal Conjugate}, we prove that the system \eqref{Eq: conjugate gradient} converges in $m$ iterations, in the sense that the estimate $\mathbf{\hat{p}_z}$ coincides with the exact solution $\mathbf{p_z}$ of the system of linear equations \eqref{Eq: linear constraint - MzMz} after a finite number of iterations.

\begin{lemma} \label{Lem: CG - Convergence}
	For any $\mathbf{\hat{p}_z}(0) \in \mathbb{R}^{3 n_z}$, consider the sequence $\left\{\mathbf{\hat{p}_z}(k)\right\}$, $k = 1,2,\dots$ generated by \eqref{Eq: conjugate gradient}. Then, there exists a constant $m \leq 3n_z$ such that $\hat{p}_z(k) = \mathbf{p_z}$, solution of the linear equation system \eqref{Eq: linear constraint - MzMz}, for any $k \geq m$.
\end{lemma}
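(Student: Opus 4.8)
The plan is to combine the mutual orthogonality of the residuals from Lemma~\ref{Lem: Orthogonal Conjugate} with the fact that $A = \bar{M}_z^\top \bar{M}_z$ is symmetric and positive definite (which holds precisely because all unlocalizable nodes have been removed, see \eqref{Eq: linear constraint - MzMz}). The key observation is purely dimensional: the residuals $\mathbf{r}(0),\mathbf{r}(1),\dots$ all live in $\mathbb{R}^{3n_z}$, and by \eqref{Eq: relations-a} the nonzero ones are pairwise orthogonal, hence linearly independent. Since $\mathbb{R}^{3n_z}$ admits at most $3n_z$ linearly independent vectors, at most $3n_z$ residuals can be nonzero, so some $\mathbf{r}(m)$ with $m\le 3n_z$ must vanish.

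First I would record the identity that links the iterates to the linear system: from \eqref{Eq: r(0)} and the recursion \eqref{Eq: conjugate gradient-d} one checks by induction that $\mathbf{r}(k) = \bar{M}_z^\top \bar{B}_z \mathbf{p_a} - A\,\mathbf{\hat{p}_z}(k)$, i.e. $\mathbf{r}(k)$ is exactly the residual of \eqref{Eq: linear constraint - MzMz} at the current estimate. Next I would verify that the recursion runs without breakdown as long as $\mathbf{r}(k)\neq 0$: the relation \eqref{Eq: relations-c} gives $\mathbf{v}^\top(k)\mathbf{r}(k) = \mathbf{r}^\top(k)\mathbf{r}(k) > 0$, so $\mathbf{v}(k)\neq 0$, and positive definiteness of $A$ then yields $\mathbf{v}^\top(k)A\mathbf{v}(k) > 0$, making the step size $\alpha(k)$ in \eqref{Eq: conjugate gradient-b} well-defined and strictly positive. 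The dimension count then supplies a smallest index $m\le 3n_z$ with $\mathbf{r}(m)=0$, whereupon the identity above reads $A\,\mathbf{\hat{p}_z}(m) = \bar{M}_z^\top \bar{B}_z \mathbf{p_a}$, which is \eqref{Eq: linear constraint - MzMz}; invertibility of $A$ makes its solution unique, so $\mathbf{\hat{p}_z}(m) = \mathbf{p_z}$. To conclude I would show the estimate is frozen at the solution for $k\ge m$: once $\mathbf{r}(m)=0$, the numerator of $\alpha(m)$ vanishes so $\alpha(m)=0$ and $\mathbf{\hat{p}_z}(m{+}1)=\mathbf{\hat{p}_z}(m)$, while $\beta(m{-}1)=0$ in \eqref{Eq: conjugate gradient-e} forces $\mathbf{v}(m)=0$ through \eqref{Eq: conjugate gradient-f} and hence $\mathbf{r}(m{+}1)=\mathbf{r}(m)-\alpha(m)A\mathbf{v}(m)=0$; a trivial induction then gives $\mathbf{\hat{p}_z}(k)=\mathbf{p_z}$ for all $k\ge m$.

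The hard part will not be the dimension argument, which is immediate once orthogonality is in hand, but rather the well-posedness bookkeeping: one must confirm that no denominator in \eqref{Eq: conjugate gradient-b} or \eqref{Eq: conjugate gradient-e} degenerates before the residual first vanishes, and then carefully handle the boundary behavior of $\alpha(m)$ and $\beta(m-1)$ at the very step where $\mathbf{r}(m)=0$, so that the claimed stationarity for $k\ge m$ is genuinely proved rather than tacitly assumed. Establishing $\mathbf{v}^\top(k)A\mathbf{v}(k)>0$ from \eqref{Eq: relations-c} together with positive definiteness is the technical linchpin that makes the whole chain rigorous.
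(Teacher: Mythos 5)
Your proof is correct, but it follows a genuinely different classical route than the paper's. The paper uses the Hestenes--Stiefel conjugate-direction expansion: it writes the initial error $\mathbf{p_z}-\mathbf{\hat{p}_z}(0)$ as a combination $\sum_{k=0}^{m-1}\mu(k)\mathbf{v}(k)$ of the $A$-conjugate (hence linearly independent) directions with $m\le 3n_z$, then uses Lemma~\ref{Lem: Orthogonal Conjugate} to compute $\mu(k)=\frac{\mathbf{r}^\top(k)\mathbf{r}(k)}{\mathbf{v}^\top(k)A\mathbf{v}(k)}=\alpha(k)$, so that the iterate $\mathbf{\hat{p}_z}(m)=\mathbf{\hat{p}_z}(0)+\sum_{k}\alpha(k)\mathbf{v}(k)$ coincides with $\mathbf{p_z}$. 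You instead exploit the other half of Lemma~\ref{Lem: Orthogonal Conjugate}: nonzero residuals are pairwise orthogonal in $\mathbb{R}^{3n_z}$, so at most $3n_z$ of them can be nonzero, forcing $\mathbf{r}(m)=0$ for some $m\le 3n_z$; then the inductive identity $\mathbf{r}(k)=\bar{M}_z^\top\bar{B}_z\mathbf{p_a}-A\mathbf{\hat{p}_z}(k)$ together with positive definiteness of $A$ pins $\mathbf{\hat{p}_z}(m)=\mathbf{p_z}$. Your version buys two things the paper leaves implicit: the well-posedness check that no denominator in \eqref{Eq: conjugate gradient-b} or \eqref{Eq: conjugate gradient-e} degenerates before the residual first vanishes, and an explicit argument that the iterate stays at $\mathbf{p_z}$ for all $k\ge m$ (the paper only establishes $\mathbf{\hat{p}_z}(m)=\mathbf{p_z}$, even though the lemma asserts equality for every $k\ge m$); the paper's approach, in exchange, identifies the step sizes $\alpha(k)$ with the exact expansion coefficients, which is more informative about what each iteration accomplishes. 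One small wrinkle in your stationarity step: you assert $\alpha(m)=0$ ``because the numerator vanishes,'' but since you also deduce $\mathbf{v}(m)=0$, the denominator $\mathbf{v}^\top(m)A\mathbf{v}(m)$ vanishes as well, so $\alpha(m)$ is formally $0/0$. The clean fix is to observe that $\mathbf{v}(m)=0$ freezes the updates \eqref{Eq: conjugate gradient-c}--\eqref{Eq: conjugate gradient-d} for any finite value conventionally assigned to $\alpha(m)$, or simply to declare termination once $\mathbf{r}(k)=0$; with that one-line repair your argument is complete.
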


\begin{proof}
Let $m$ be the smallest integer such that the difference between $\mathbf{\hat{p}_z}(0)$ and the solution $\mathbf{p_z}$ is in the subspace spanned by $\mathbf{v}(0), \mathbf{v}(1), \cdots, \mathbf{v}(m-1)$. Due to the fact that $A = \bar{M}_z^\top \bar{M}_z \in \mathbb{R}^{3 n_z \times 3 n_z}$ and $\mathbf{\hat{p}_z} \in \mathbb{R}^{3 n_z}$, the dimension of the subspace must be less than or equal to $3n_z$. Hence, $m \leq 3n_z$. Since the vectors $\mathbf{v}(0), \mathbf{v}(1), \cdots, \mathbf{v}(m-1)$ are linearly independent according to Lemma~\ref{Lem: Orthogonal Conjugate}, we choose scalars $\mu(0), \cdots, \mu(m-1)$ such that we can write
\begin{equation}
\label{Eq: pz}
\mathbf{p_z} = \mathbf{\hat{p}_z}(0) + \mu(0) \mathbf{v}(0) + \cdots + \mu(m-1) \mathbf{v}(m-1).
\end{equation}
Then, by substituting \eqref{Eq: pz} into \eqref{Eq: r(0)}, we get 
\begin{equation}
\begin{array}{lll}
    \mathbf{r}(0) &=& b - A \mathbf{\hat{p}_z}(0) = A (\mathbf{p_z} - \mathbf{\hat{p}_z}(0))\\&=& \mu(0) A \mathbf{v}(0) +\cdots + \mu(m-1) A \mathbf{v}(m-1),
\end{array}
\end{equation}
with $b = \bar{M}_z^\top \bar{B}_z \mathbf{p_a}$. Next, using Lemma~\ref{Lem: Orthogonal Conjugate}, we compute
\begin{equation}
\begin{array}{l}
\mathbf{v}^\top(k) \mathbf{r}(0) = \mathbf{v}^\top(k)\big( \mu(0) A \mathbf{v}(0) + \cdots + \mu(m-1)\cdot\\ \qquad A \mathbf{v}(m-1) \big) = \mu(k) \mathbf{v}^\top(k) A \mathbf{v}(k),
\end{array}
\end{equation}
and then obtain 
\begin{equation}
    \mu(k) = \frac{\mathbf{v}^\top(k) \mathbf{r}(0)}{\mathbf{v}^\top(k) A \mathbf{v}(k)} = \frac{\mathbf{r}^\top(k) \mathbf{r}(k)}{\mathbf{v}^\top(k) A \mathbf{v}(k)}.
\end{equation}
So far, we find that $\alpha(k) = \mu(k)$, and hence that $\mathbf{p_z} = \mathbf{\hat{p}_m}$, which yields the claim. 
\end{proof}

Finally, we prove that Algorithm~\ref{Alg: Distributed Localization Algorithm} solves Problem~\ref{problem2}.

\begin{theorem} \label{Tho: Localization - Convergence}
	For any $\mathbf{\hat{p}_i}(0) \in \mathbb{R}^{3}$, $i \in \mathcal{V}_z$, the estimate $\mathbf{\hat{p}_i}(k)$ generated by the Distributed Localization Algorithm in Algorithm~\ref{Alg: Distributed Localization Algorithm} converges to the absolute position $\mathbf{p_i}$ in finite time. Specifically, it takes no more than $3n_z \left(2 \delta_z \left( \left\lceil \frac{n_z}{K} \right\rceil + 1 \right)\right)$ iterations to converge.
\end{theorem}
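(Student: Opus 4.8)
The plan is to show that the Distributed Localization Algorithm in Algorithm~\ref{Alg: Distributed Localization Algorithm} is exactly a distributed realization of the centralized conjugate gradient recursion \eqref{Eq: conjugate gradient}, and then to combine the finite-termination guarantee of Lemma~\ref{Lem: CG - Convergence} with the per-step cost bound of Lemma~\ref{Lem: FKMS - Convergence}.

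First, I would verify the equivalence between the node-wise updates and the stacked recursion \eqref{Eq: conjugate gradient}, showing by induction on $k$ that node $i$ stores precisely the $i$th block of the global vectors $\mathbf{\hat{p}_z}(k)$, $\mathbf{r}(k)$, and $\mathbf{v}(k)$. For the matrix--vector product, the LS Algorithm (Subalgorithm~\ref{Alg: Local Sum Algorithm}) invoked in line~1 of the iteration yields $\mathbf{q_i}(k)$, the $i$th block of $\mathbf{q}(k)=\bar{M}_z^\top\bar{M}_z\,\mathbf{v}(k)$, through its two-round structure that realizes multiplication first by $\bar{M}_z$ and then by $\bar{M}_z^\top$. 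For the step sizes, I would observe that the numerator and denominator fed to FKMS in lines~2 and~5 are sums over $i\in\mathcal V_z$ of the local inner products $\mathbf{r_i}^\top(k)\mathbf{r_i}(k)$, $\mathbf{v_i}^\top(k)\mathbf{q_i}(k)$, and $\mathbf{r_i}^\top(k+1)\mathbf{r_i}(k+1)$; since these sum to the global inner products $\mathbf{r}^\top(k)\mathbf{r}(k)$, $\mathbf{v}^\top(k)\bar{M}_z^\top\bar{M}_z\mathbf{v}(k)$, and $\mathbf{r}^\top(k+1)\mathbf{r}(k+1)$, Lemma~\ref{Lem: FKMS - Convergence} guarantees that every node obtains the \emph{same} scalar, equal to the centralized step sizes \eqref{Eq: alpha and beta}. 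Hence $\alpha_i(k)\equiv\alpha(k)$ and $\beta_i(k)\equiv\beta(k)$, and the updates in lines~3, 4, and~6 are precisely the block-wise form of \eqref{Eq: conjugate gradient-c}, \eqref{Eq: conjugate gradient-d}, and \eqref{Eq: conjugate gradient-f}, while the initialization in lines~1--4 reproduces $\mathbf{r}(0)$ as in \eqref{Eq: r(0)}.

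With this equivalence in hand, finite-time convergence is immediate: Lemma~\ref{Lem: CG - Convergence} provides an integer $m\le 3n_z$ with $\mathbf{\hat{p}_z}(m)=\mathbf{p_z}$, the exact solution of \eqref{Eq: linear constraint - MzMz}, so that each node's estimate satisfies $\mathbf{\hat{p}_i}(k)=\mathbf{p_i}$ for all $k\ge m$. It then remains to count the communication steps. Each outer iteration of Algorithm~\ref{Alg: Distributed Localization Algorithm} invokes FKMS twice, once to form $\alpha$ and once to form $\beta$; as noted after Lemma~\ref{Lem: FKMS - Convergence}, the numerator and denominator of each can be computed concurrently, so by that lemma each invocation costs at most $\delta_z\big(\lceil n_z/K\rceil+1\big)$ steps and one iteration costs at most $2\delta_z\big(\lceil n_z/K\rceil+1\big)$ steps (the two LS rounds in line~1 contribute only a fixed additive cost dominated by this term). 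Multiplying by the at most $3n_z$ iterations yields the claimed bound $3n_z\big(2\delta_z(\lceil n_z/K\rceil+1)\big)$.

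The main obstacle I expect is the rigorous justification of the equivalence in the first step, specifically that the distributed step sizes are both \emph{exact} and \emph{consensual}. This hinges on two facts that must be stated carefully: FKMS returns the true sum rather than an approximation, so no error accumulates across the $m$ conjugate gradient iterations and the node-wise trajectory never drifts from the centralized one; and because each numerator and denominator is a global sum independent of the querying node, every node recovers an identical $\alpha(k)$ and $\beta(k)$, which is exactly what permits the block-wise updates to remain a faithful copy of a single global recursion. Once these are pinned down, the remaining bookkeeping is routine and the two cited lemmas deliver the result.
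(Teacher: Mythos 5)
Your proposal is correct and follows essentially the same route as the paper: Lemma~\ref{Lem: FKMS - Convergence} bounds each FKMS invocation by $\delta_z\left(\left\lceil \frac{n_z}{K}\right\rceil+1\right)$ steps, two invocations per iteration give $2\delta_z\left(\left\lceil \frac{n_z}{K}\right\rceil+1\right)$, and Lemma~\ref{Lem: CG - Convergence} caps the number of conjugate gradient iterations at $3n_z$. The only difference is that you spell out the equivalence between the node-wise updates and the stacked recursion \eqref{Eq: conjugate gradient} inside the proof, whereas the paper asserts this equivalence (including the consistency $\alpha_i(k)\equiv\alpha(k)$, $\beta_i(k)\equiv\beta(k)$) in the text preceding the lemmas --- a welcome addition of rigor, not a departure in method.
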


\begin{proof}
According to Lemma~\ref{Lem: FKMS - Convergence}, the step sizes $\alpha_i$ and $\beta_i$ can be obtained in at most $\delta_z \left( \left\lceil \frac{n_z}{K} \right\rceil + 1 \right)$ iterations, Therefore, running the iteration in Algorithm~\ref{Alg: Distributed Localization Algorithm} once requires a total of $2 \delta_z \left( \left\lceil \frac{n_z}{K} \right\rceil + 1 \right) $ iterations. Then, by applying the results of Lemma~\ref{Lem: CG - Convergence}, we conclude that the bound on the convergence of Algorithm~\ref{Alg: Distributed Localization Algorithm} is given by $3n_z \left(2 \delta_z \left( \left\lceil \frac{n_z}{K} \right\rceil + 1 \right) \right)$ iterations.
\end{proof}

\begin{corollary}
We can establish a conservative bound on the number of steps needed for convergence of the Distributed Localization Algorithm in terms of the total number of nodes of the network $n$, as being less than $3n \left(2n \left( \left\lceil \frac{n}{K} \right\rceil + 1 \right) \right)$. 
\end{corollary}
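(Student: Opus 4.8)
The plan is to derive the corollary directly from the finite-time convergence bound established in Theorem~\ref{Tho: Localization - Convergence}, replacing the quantities $n_z$ and $\delta_z$ that pertain to the reduced localizable sub-graph $\mathcal G_z$ with the global network size $n$. The starting point is the guarantee that Algorithm~\ref{Alg: Distributed Localization Algorithm} converges in no more than $3n_z \left(2 \delta_z \left( \left\lceil \frac{n_z}{K} \right\rceil + 1 \right)\right)$ iterations, and the goal is merely to show that this expression is strictly dominated by $3n \left(2n \left( \left\lceil \frac{n}{K} \right\rceil + 1 \right) \right)$. Since the theorem already supplies an exact count, no new dynamical or algebraic analysis is required; the corollary is an order-of-growth restatement.

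The key ingredients are the structural inequalities recorded in \eqref{Eq: ineq}, namely $n_z \le n-4 < n$ and $\delta_z \le n_z - 1 < n$, which yield the strict bounds $n_z < n$ and $\delta_z < n$. First I would bound the two polynomial factors directly: $3n_z < 3n$ and $2\delta_z < 2n$, since both quantities are strictly smaller and nonnegative. Next I would handle the ceiling term by invoking monotonicity of $x \mapsto \lceil x \rceil$: from $n_z < n$ and $K > 0$ it follows that $\frac{n_z}{K} < \frac{n}{K}$, hence $\left\lceil \frac{n_z}{K} \right\rceil \le \left\lceil \frac{n}{K} \right\rceil$ and therefore $\left\lceil \frac{n_z}{K} \right\rceil + 1 \le \left\lceil \frac{n}{K} \right\rceil + 1$.

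Finally, because every factor is positive and at least two of them are strictly smaller while the third is no larger, the product of the $n_z$-based factors is strictly less than the product of the $n$-based factors, giving the claimed bound. I do not anticipate any substantive obstacle: the entire argument is a routine chain of monotonicity and positivity estimates. The only point requiring minor care is that the ceiling comparison must be stated as a non-strict inequality (the ceilings of two distinct reals can coincide), yet the overall inequality remains strict because it is inherited from the strictly dominated polynomial factors $3n_z<3n$ and $2\delta_z<2n$.
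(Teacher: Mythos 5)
Your proposal is correct and follows exactly the paper's route: the paper's proof is a one-line appeal to Theorem~2 together with the inequalities $n_z \le n-4 < n$ and $\delta_z \le n_z - 1 < n$ from \eqref{Eq: ineq}, which is precisely the substitution argument you spell out. Your additional care with the ceiling term (non-strict comparison, with strictness inherited from the polynomial factors) is a sound elaboration of details the paper leaves implicit.
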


\begin{proof}
    This comes directly from Theorem~\ref{Tho: Localization - Convergence} and \eqref{Eq: ineq}.
\end{proof}

\begin{remark}
Here, we provide a theoretical bound on the number of steps in Algorithm~\ref{Alg: Distributed Localization Algorithm}. However, in practical scenarios, due to the presence of the rounding errors, it is possible that a few additional steps may be needed.
\end{remark}

\section{Simulations}  \label{Sec: Simulation}

To validate our algorithms and test their effectiveness, we present a set of numerical simulations performed on two case studies of sensor networks of different size.

\subsection{Case Study I}

In the first case study, we consider a sensor network with $n=50$ nodes consisting of four anchor nodes (selected at random) and $46$ free nodes. Each node of the network is located in the three-dimensional space with each side of length $100$m. Inspired by practical implementations~\cite{niculescu2003ad,zhang2006secure,vetelino2017introduction}, we establish that sensors can communicate and take relative measurements if they are within a certain radius. Here, we set such a critical radius at the value of $50$m. Hence we generate the edges of the sensing and communication topology according to this rule, that is, $(i,j)\in \mathcal E\iff \|\mathbf{p_i}-\mathbf{p_j}\| \leq 50$m.

The configuration and the sensing and communication topology of the sensor network are shown in Fig.~\ref{Fig: topology-50nodes}. Using the filtering process designed in Algorithm~\ref{Alg: Distributed Localizablity Verification Algorithm}, we identified three unlocalizable nodes in the sensor network (in red in Fig.~\ref{Fig: topology-50nodes}), which are excluded from the localization process.

\begin{figure}
	\centering
	\includegraphics[width=\linewidth]{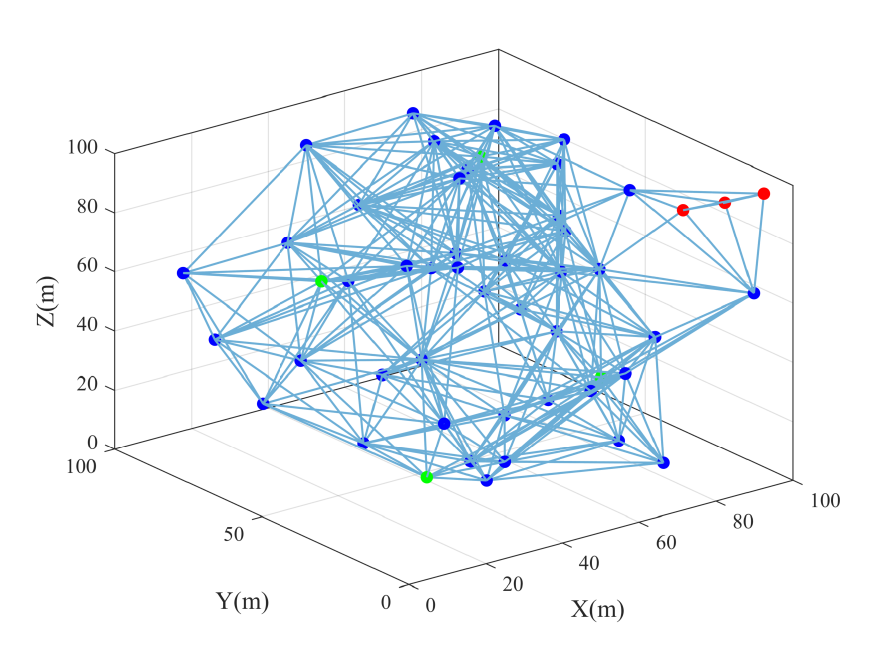}
	\caption{Configuration and topology of the sensor network for Case Study~I. Anchor nodes are denoted in green, unlocalizable nodes (detected by Algorithm~\ref{Alg: Distributed Localizablity Verification Algorithm}) are in red, and localizable nodes in blue. }\vspace{-10pt}
	\label{Fig: topology-50nodes}
\end{figure}

Then, we run Algorithm~\ref{Alg: Distributed Localization Algorithm} to estimate the positions of the remaining $43$ localizable free nodes, starting from initial estimates picked uniformly at random in the domain $100 \text{m}\times 100\text{m} \times 100$m. Fig.~\ref{Fig: convergence-50nodes-CG} illustrates the estimated positions at the end of the iterations (in red), i.e., after 5,369 steps (performed in approximately 18s on a 2.5 GHz 8-Core Intel Core i7-11700 computer), compared to the exact positions (in blue). Consistent with our theoretical guarantees, the estimates for all localizable nodes converge to their real positions. We also illustrate a sample trajectory of coordinate estimates for a single node in Fig.~\ref{Fig: convergence-50nodes-CG}. The trajectory shows how the estimate rapidly converges to the real position of the node, demonstrating the performance of our algorithm.

\begin{figure}
	\centering
	\includegraphics[width=\linewidth]{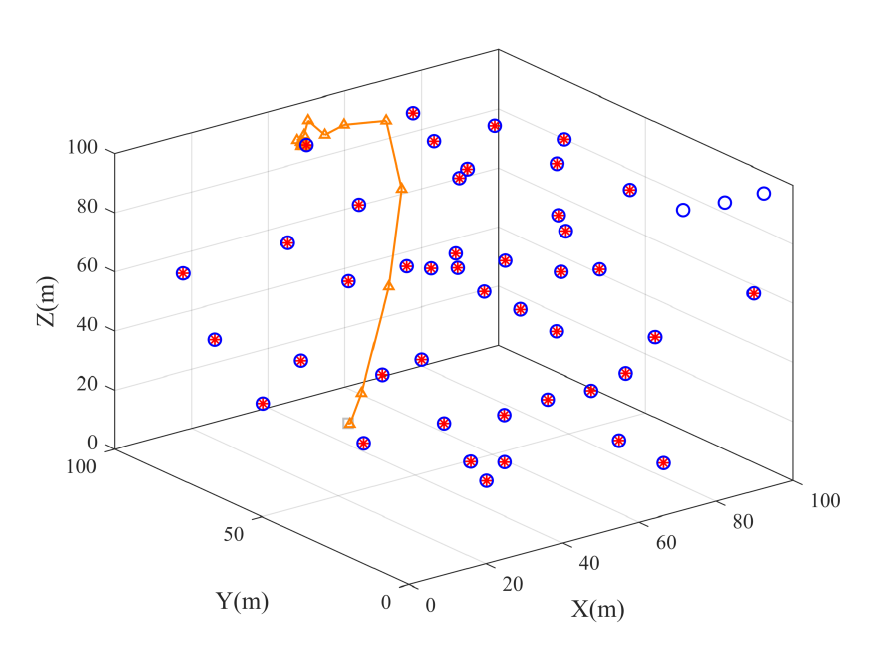}
	\caption{Output of our localization algorithm compared to the absolute positions for Case Study I. The blue circles denote the real absolute position and the red asterisks denote the final estimated position. The plot also depicts a sample trajectory: the initial estimate is represented by a grey square, the estimates at each 5 iterations by orange triangles. }\vspace{-10pt}
	\label{Fig: convergence-50nodes-CG}
\end{figure}

Now, we perform a comparison of our distributed localization algorithm (Algorithm~\ref{Alg: Distributed Localization Algorithm}) with the state-of-the-art algorithms proposed in the literature. In particular, we consider the Jacobi Under Relaxation Iteration algorithm (JU) from~\cite{xia2022exploratory} and the Richardson Iteration algorithm (RI), proposed in~\cite{diao2014barycentric, cheng2016single, han2017barycentric}. In Fig.~\ref{Fig: error-50nodes}, we report the results of our comparison. In particular, the plot shows how the estimation error ratio $\|\mathbf{\hat{p}_z}(k)-\mathbf{p_z}\| / \|\mathbf{\hat{p}_z}(0)-\mathbf{p_z}\|$ obtained using our Algorithm~\ref{Alg: Distributed Localization Algorithm} (denoted as CG) compared to the same quantity computed for the other two algorithms under the same settings (real positions and initial estimates). From this comparison, we note that, after a short transient, the distributed localization algorithm proposed in this paper outperforms the other two, making a significant enhancement in the convergence rate. Moreover, it eventually allows to determine the exact locations of all nodes in a finite number of steps as guaranteed by Theorem~\ref{Tho: Localization - Convergence}. 

\begin{figure}
	\centering
	\includegraphics[width=\linewidth]{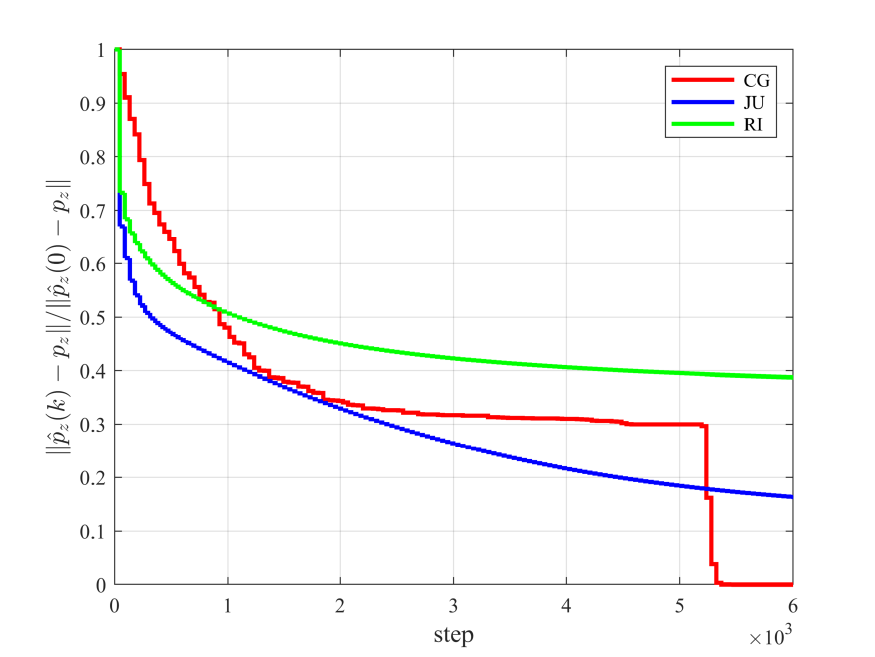}
	\caption{Evolution of the estimation error ratio using Algorithm~\ref{Alg: Distributed Localization Algorithm} (red), compared to JU~\cite{xia2022exploratory} (blue) and RI algorithms~\cite{han2017barycentric} (green) for Case Study~I.}\vspace{-10pt}
	\label{Fig: error-50nodes}
\end{figure}

\subsection{Case Study II}

Finally, we perform a set of simulations on a larger-scale network to verify the scalability of our algorithms. Specifically, we consider a scenario in which $n=1,000$ sensors are placed in the same $100\text{m}\times100\text{m}\times100$m three-dimensional space, with their positions set at random, and edges added if two sensors are within $20$m of radius. As shown in Fig.~\ref{Fig: convergence-1000nodes-CG}, Algorithm~\ref{Alg: Distributed Localizablity Verification Algorithm} can successfully identify $2$ unlocalizable nodes and, then, Algorithm~\ref{Alg: Distributed Localization Algorithm} allows to exactly determine the absolute positions of all the remaining $994$ free nodes in finite time. 

\begin{figure}
\centering
	\includegraphics[width=\linewidth]{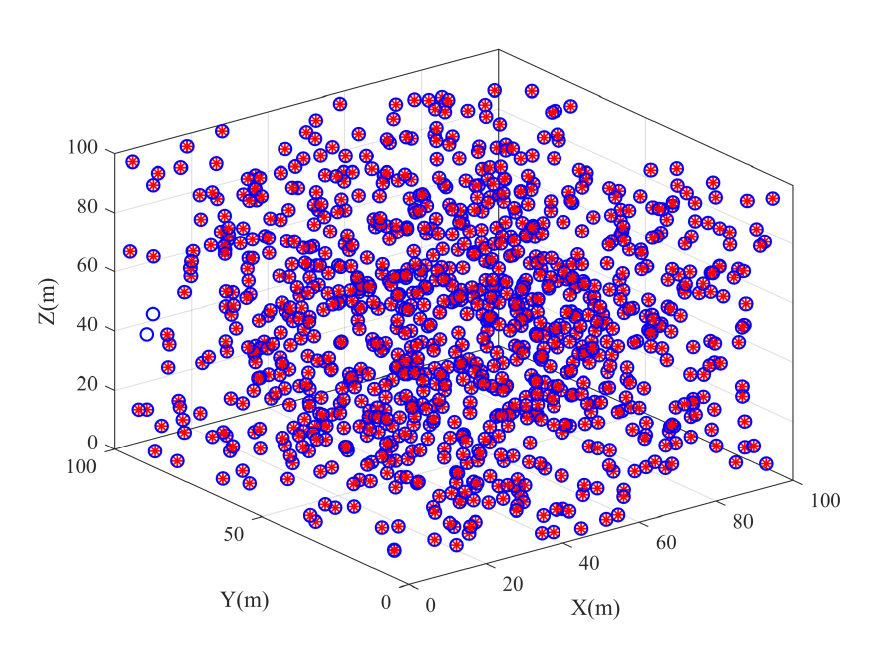}
	\caption{Output of our localization algorithm compared to the absolute positions for Case Study II. The blue circles denote the real absolute position and the red asterisks denote the final estimated position. }\vspace{-10pt}
	\label{Fig: convergence-1000nodes-CG}
\end{figure}

Similar to the previous case study, we compared the estimation error ratio of our algorithm with the other two algorithms from the state-of-the-art literature. The results, reported in Fig.~\ref{Fig: error-1000nodes}, show that as the number of nodes and network diameter increase, the performances of Algorithm~\ref{Alg: Distributed Localization Algorithm} scales well. In fact, even though computational effort needed to estimate the exact positions increases, we observe that the convergence steps remain order of magnitude smaller than the one needed for the other two algorithms. In addition, the number of steps can also be reduced at the cost of allocating more memory to the solver by increasing the value of the parameter $K$ in the FKMS Algorithm, through the tradeoff between memory size and iteration steps discussed at the end of Section~\ref{Sec: Localization}.

\begin{figure}
	\centering
	\includegraphics[width=\linewidth]{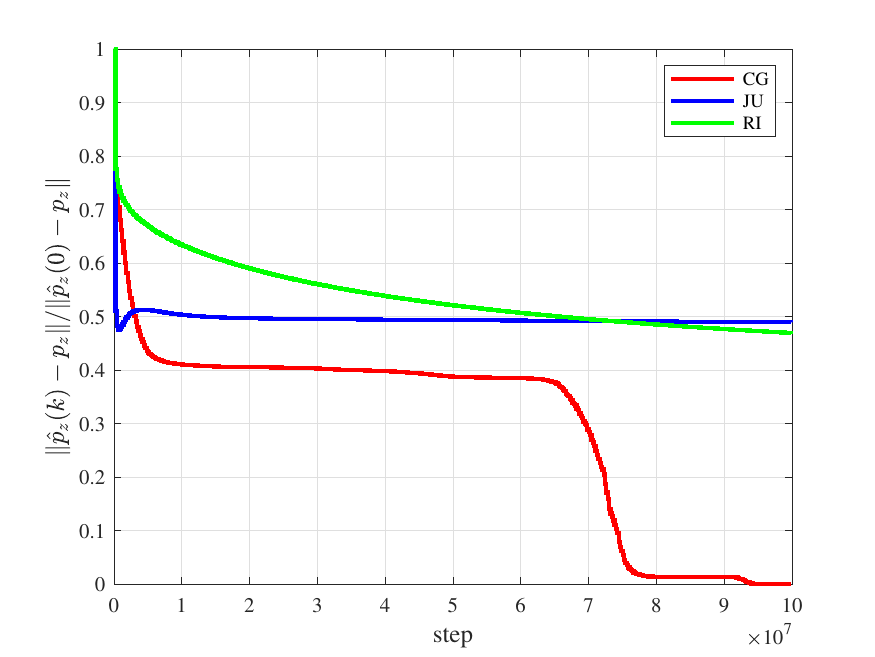}
	\caption{Evolution of the estimation error ratio using our algorithm (red), compared to JU~\cite{xia2022exploratory} (blue) and RI algorithms~\cite{han2017barycentric} (green) for Case Study II.}\vspace{-10pt}
	\label{Fig: error-1000nodes}
\end{figure}

\section{Conclusion}  \label{Sec: Conclusion}

We studied the distributed localization problem for three-dimensional sensor networks with range measurements. First, utilizing barycentric coordinates, we proposed a distributed localizability verification algorithm to identify which nodes are unlocalizable. Second, building on a conjugate gradient method, we proposed an efficient distributed localization algorithm, which is able to determine the location of all localizable nodes in finite time. Third, numerical simulations were offered to demonstrate the performance of our algorithm compared to the state-of-the-art.

The results presented in this paper pave the way for several avenues of future research. First, real-world measurements and information exchange are often subject to noises. The study of the performance of our distributed algorithms in the presence of noises is of paramount importance for assessing their applicability in real-world scenarios. Second, an interesting future research direction is to improve the performance of our algorithm, designing simultaneous localizability verification and localization. Third, to extend the applicability of our methods, our algorithms should be augmented with the design of ad-hoc methods to determine the positions of unlocalizable sensors. Fourth, the methodologies presented in this paper can be extended to different localization settings. In particular, in the process of solving the localization problem we designed a general finite-time sum consensus algorithm. Such an algorithm can be employed to effectively solve other problems, in which a consensus output should be computed in finite time.

\section*{Acknowledgment}
This work was partially supported by National Natural Science Foundation of China, under Grant No. 62173118; Shenzhen Key Laboratory of Control Theory and Intelligent Systems, under grant No. ZDSYS20220330161800001; and FAIR --- Future Artificial Intelligence Research, and received funding from the European Union Next-GenerationEU (PIANO NAZIONALE DI RIPRESA E RESILIENZA (PNRR) --- MISSIONE 4 COMPONENTE 2, INVESTIMENTO 1.3 – D.D. 1555 11/10/2022, PE00000013).

\end{document}